\DeclareMathOperator*{\argmin}{arg\,min}
\def\sharedaffiliation{\end{tabular}\newline\begin{tabular}{c}}
\newtheorem{theorem}{Theorem}
\newtheorem{lemma}{Lemma}
\newtheorem{assumption}{Assumption}
\def\1{\mathbf{1}}
\newcommand{\vv}[1]{\boldsymbol{#1}}
\begin{document}

\title{Low Delay Random Linear Coding and Scheduling\\ Over Multiple Interfaces}

\numberofauthors{1}
\author{
  \alignauthor \ Andres Garcia-Saavedra, Mohammad Karzand, Douglas J. Leith\\
   \email{\{andres.garcia.saavedra, karzandm, doug.leith\}@scss.tcd.ie}
\sharedaffiliation
\begin{tabular}{ccc}
	\affaddr{School of Computer Science and Statistics}  \\
	\affaddr{Trinity College Dublin, Ireland}     
\end{tabular}
}

\maketitle

\begin{abstract}

Multipath transport protocols like MPTCP transfer data across multiple routes in parallel and deliver it in order at the receiver. When the delay on one or more of the paths is variable, as is commonly the case, out of order arrivals are frequent and head of line blocking leads to high latency. This is exacerbated when packet loss, which is also common with wireless links, is tackled using ARQ. This paper introduces Stochastic Earliest Delivery Path First (S-EDPF), a resilient low delay packet scheduler for multipath transport protocols.  S-EDPF takes explicit account of the stochastic nature of paths and uses this to minimise in-order delivery delay.  S-EDPF also takes account of FEC, jointly scheduling transmission of information and coded packets and in this way allows lossy links to reduce delay and improve resiliency, rather than degrading performance as usually occurs with existing multipath systems.  We implement S-EDPF as a multi-platform application that does not require administration privileges nor modifications to the operating system and has negligible impact on energy consumption.  We present a thorough experimental evaluation in both controlled environments and \emph{into the wild}, revealing dramatic gains in delay performance compared to existing approaches.

\end{abstract}

\section{Introduction}
\label{sec:intro}


Current mobile communication devices embed multiple communication interfaces  to access the Internet (e.g. HSPA, LTE and IEEE 802.11).
Transporting data between a source and destination in parallel along multiple paths is well-recognised, at least in principle, as an effective means to improve  performance, e.g.  increase throughput~\cite{sundararajan2009network, fountainmptcp, khalili2013mptcp }, resilience (if one path  breaks, the connection can gracefully failover to the remaining paths) \cite{Zhang04atransport}, and load balancing~\cite{ wischik2011design}.
A well-known example of a multipath transport protocol is Multipath TCP (MPTCP)~\cite{ford2011tcp}, which extends the traditional single-path TCP (SPTCP) to stripe the data of a single connection across multiple routes or \emph{subpaths}. 
The most popular Linux implementation of MPTCP supports two schedulers to assign packets into subpaths~\cite{paasch2014experimental}: a \emph{round-robin} (RR) scheduler which iterates over each subflow regardless of their latency properties, and the \emph{lowest RTT First} (LowRTT) scheduler that gives priority to paths with lower round-trip times (RTT). 
Finally, similarly to SPTCP, packet loss recovery is achieved by means of an ARQ mechanism, i.e., feedback from the receiver is used to detect and retransmit lost packets.

\subsection{The head-of-line blocking problem}

\begin{figure}[t!]
\centering
\includegraphics[width=0.93\columnwidth ]{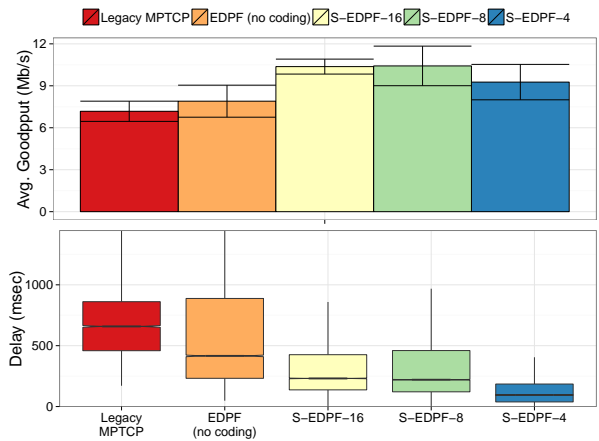}
\vspace{-5mm}
\caption{Multipath uploads (2.4-Ghz WiFi + LTE) in a home environment. Mean goodput and standard error at the top. Box and whiskers for packet  delay at the bottom. Details in \S\ref{sec:performance}.}
\label{fig:exp:mptcp}
\vspace{-5mm}
\end{figure}

Although  a promising technology, building an efficient, practically usable, multipath transfer mechanism remains highly challenging. Indeed, issues that did not exist in the single-path context appear now in the multipath paradigm, fostering a rich amount of research on packet scheduling~\cite{edpf, paasch2014experimental}, loss recovery~\cite{fountainmptcp, sundararajan2009network}, and rate and congestion control~\cite{wischik2011design}. In this paper we design and prototype S-EDPF (Stochastic Earliest Delivery Path First), a low-delay packet scheduler for multipath transport protocols. Like MPTCP, we buffer packets at the receiver as they arrive until they can be delivered to the application \emph{in order}. This causes an additional  delay to the delivery of packets  as these may have to await others that  ($i$) arrive out of order (a frequent event with multipath transporting), or ($ii$) have been lost and need to be recovered (e.g. by retransmission). This is known as \emph{head-of-line blocking} (HOL) and its effects on the performance of MPTCP are illustrated by the experiments of Fig.~\ref{fig:exp:mptcp}. These consist of a set of uplink transmissions from a laptop attached to a home-based WiFi network (in the 2.4Ghz band) and a 3G/4G dongle (with Meteor, an Irish  provider), and show how delay scales up to half a second with MPTCP  in a real environment, an intolerable value for real-time applications. 



{\bf Out of order arrivals.} An adequate scheduling of packets is of paramount importance to minimise reordering delay~\cite{paasch2014experimental}. Besides the two aforementioned schedulers used in the Linux implementation of MPTCP, it is worth highlighting EDPF (Earliest Delivery Path First) \cite{edpf}, which assigns  packets to the subpath with earliest expected arrival. If links are deterministic and there are no losses, EDPF is optimal. However, transmission rates and propagation delays \emph{are random in nature}~\cite{paasch2014experimental, doug-lte-measurements}, and taking decisions based only on averages renders suboptimal performance as Fig.~\ref{fig:example-stochastic} illustrates.\footnote{The study in \cite{chen2013measurement} with MPTCP and US mobile ISPs show that 20\% of packets experience  $>$150ms of \emph{reordering}  delay.}   Indeed, Fig.~\ref{fig:exp:mptcp} shows mild improvements of EDPF  relative to MPTCP in a real experimental setup.


\begin{figure}[t!]
\centering
\includegraphics[width=0.91\columnwidth]{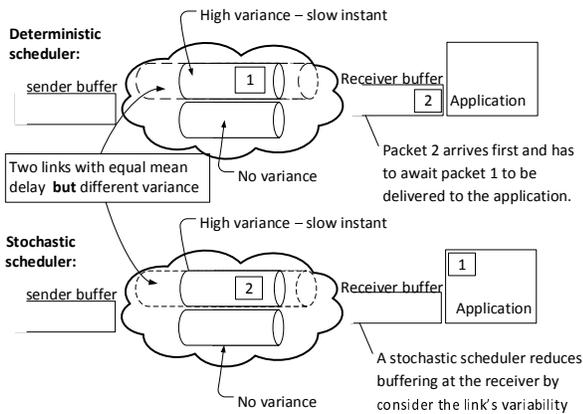}
\vspace{-3mm}
\caption{Two packets are to be scheduled across two subpaths of identical average delay. A ``deterministic'' scheduler may assign packet 1 to subpath 1 (variable delay) and packet 2 to subpath 2 (fixed delay) forcing packet 2 to wait buffered for packet 1 during slow instantiations of subpath 1. In contrast, scheduling packet 1 in subpath 2 would lead to no buffering delays.
}
\vspace{-5mm}
\label{fig:example-stochastic}
\end{figure}

{\bf Losses.} Another source for the HOL problem is losses because  packets have to wait  until these are recovered. ARQ is a well-tested mechanism to address losses, unbeatable when there is no cost for feedback (neither in overhead nor in time). However, high RTTs due to congestion or long distances can severely delay the reception of feedback information and thus the delivery of retransmitted packets. A wealth of approaches based on ARQ or Forward Error Correction (FEC) has been proposed in the past (see~\cite{huang2007improving} and references therein). Recently, the application of network coding in transport protocols has shown a major impact on throughput performance. Network coding was first integrated with TCP by Sundararajan \textit{et al.}\cite{nc-meets-tcp}. SlideOR~\cite{slideor} uses a sliding block mechanism.  CoMP \cite{gheorghiu2010multipath} is a  multipath transport scheme where the rate of coded packets is controlled by a credit-based method. A Fountain (rateless) code is applied to MPTCP in \cite{fountainmptcp} to alleviate bottleneck issues with heterogeneous links. The fundamental problem of all these works, however, is that the throughput gains come at the cost of high delay because  a full set of coded packets need to be received before starting the decoding process (see Fig.~\ref{fig:example-blockcode}). For this reason, we do not consider block or rateless codes in this paper and propose instead a novel streaming coding technique.

\begin{figure}[t!]
\centering
\includegraphics[width=0.91\columnwidth]{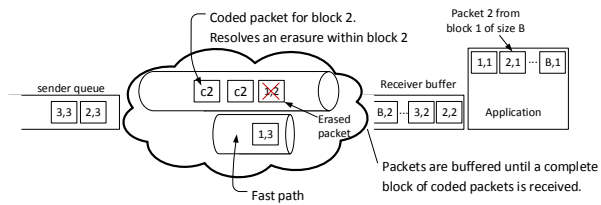}
\vspace{-3mm}
\caption{With block codes, the receiver buffer stores as many \emph{degrees of freedom} (linearly independent coded packets) as the size of the block to decode and deliver data to the application.}
\label{fig:example-blockcode}
\vspace{-5mm}
\end{figure}

\subsection{Our contributions}

Our main contributions are summarised as follows:
\begin{itemize}
 \item We design S-EDPF, a novel scheduler that assigns packets to network interfaces exploiting stochastic information of each subpath's  delays, in contrast to MPTCP's LowRTT or EDPF which do not take such information into account, with the goal of minimising the impact of out of order arrivals in the presence of variable (random) delays. See \S\ref{sec:scheduling}.
 \item S-EDPF features a novel streaming coding scheme that uses lossy links to transmit redundant (coded) information. This mechanism helps us to improve delay performance \emph{dramatically} when there are packet losses caused by wireless noise or interference, and to exploit bad-quality links that otherwise would drag down overall performance. See \S\ref{sec:code}.
 \item We implement S-EDPF in a real prototype. Our prototype is implemented in userspace and works efficiently on multiple platforms (Linux, Android, *BSD, MAC OS X) without the requirement of administration (root) privileges. See \S\ref{sec:prototype}.
 \item We evaluate the performance of our prototype thoroughly in both controlled environments, with emulated conditions to evaluate its behavior,  and \emph{into the wild}, to illustrate the gains of S-EDPF in real environments with real applications. See \S\ref{sec:performance}.
\end{itemize}

Fig.~\ref{fig:exp:mptcp} shows the performance of S-EDPF for 3 different settings (details in the following) demonstrating how delay can be greatly improved in real environments.

\section{Multipath Stochastic Scheduler}
\label{sec:scheduling}

S-EDPF is in charge of selecting which path and at what time to transmit each  packet. In this section we assume lossless channels (the extension to lossy paths is considered in \S\ref{sec:code}) and address the problem of scheduling transmissions with the goal of minimising the impact of out of order arrivals in the presence of random delays.

\subsection{Model Description}

We have to schedule $\mathcal{K}=\{1,2,\dots\}$ packets across $\mathcal{P}=\{1,\dots,P\}$ subpaths with the goal of  minimising in-order delivery delay.  Assume that on each path the time at the sender is slotted, indexed by $\mathcal{S}=\{1,2,\dots\}$, such that one packet can be transmitted in a slot.  Let $t_{p,s}$ denote the start time in seconds of slot $s$ on path $p$.  We do not assume that the slots on different paths are aligned or that slots have fixed duration, and so the link rate of each path may change over time due to the action of congestion control, for instance.
The time at which a packet sent in slot $s$ on path $p$ arrives at the destination is a random variable $a_{p,s}$, with $a_{p,s}\ge t_{p,s}$ to respect causality.  We make the following assumption:
\begin{assumption}[Reordering]\label{assump:reorder}
Consider two slots with indexes $a$ and $b$ on path $p$.   When $a < b$ then $a_{p,a} < a_{p,b}$.   In other words, there  is no reordering of arrivals within the same path.  
\end{assumption}
While packet reordering can occur within a single path (e.g. due to sudden routing changes), it is usually relatively infrequent compared to out-of-order arrivals across paths~\cite{measurements-out-of-order}, 
and so Assumption \ref{assump:reorder} is mild.   It is important to stress that Assumption \ref{assump:reorder} applies only to packets sent on the \emph{same} path.   Packets sent on different paths may still arrive out of order if they experience different (random) delays.   
Assumption \ref{assump:reorder} implies that the delays experienced by packets on the same path are correlated and not i.i.d. 
To make this explicit, let random variable $\Delta_{p,s,s+1} = a_{p,s+1}-a_{p,s}\ge 0$, $s=1,2,\dots$ and define $\Delta_{p,0,1}=a_{p,1}$.   Then,
\begin{align}
a_{p,s} = \sum_{r=1}^s \Delta_{p,r-1,r}
\end{align}
where $a_{p,1}$ can be computed as $a_{p,1} = \theta_{p,1} + L_{p,1}/B_{p,1}$, where $\theta_{p,1}$,  $L_{p,1}$ and $B_{p,1}$ are the \emph{propagation} delay experienced, the \emph{size} in bits of the scheduled packet, and the access link \emph{bitrate} on slot 1 and path $p$, respectively.

To facilitate scheduling we make the following regularity assumption,
\begin{assumption}\label{assum:iid}
$\Delta_{p,s,s+1}$ is i.i.d., i.e. $\Delta_{p,s,s+1}\!\sim\!{\Delta}_p$.
\end{assumption}
Note that this assumption may be relaxed, at the cost of increasing the complexity of the scheduler.
In addition, we make the following assumptions.

\begin{assumption}\label{assum:noempty}
A packet is transmitted in every slot. 
\end{assumption}
Assumption \ref{assum:noempty} ensures that throughput is maximised.  Note that, by sacrificing throughput, lower delay might be achieved e.g. by  sending packets only along the path with lowest delay.  However, we leave investigation of this type of trade-off between throughput for delay to future work and instead focus on the use of coded packets to trade off throughput for delay (see \S\ref{sec:code}).
\begin{assumption}\label{assum:delay}
Delays are upper bounded by $\bar{T}_p$. 
\end{assumption}
\begin{assumption}\label{assum:slot}
The slot duration on path $p$ can be approximated as being constant, $T_p$, over window $\bar{T}_p$.
\end{assumption}
Assumptions \ref{assum:delay} and \ref{assum:slot} could also be relaxed at the cost of a higher scheduling complexity.

\subsection{Minimum Delay Packet Scheduling}

Let $p_k\in\mathcal{P}$ denote the path on which packet $k\in\mathcal{K}$ is transmitted, and let $s_{k}\in\mathcal{S}$ denote the slot on path $p_k$ in which packet $k$ is transmitted.  Packet $k$ is therefore transmitted at time $t_{p_k,s_k}$ and the time at which packet $k$ arrives is random variable $a_{p_k,s_k}$.  

At the receiver we require in-order delivery of packets arriving from multiple paths.  To achieve this, the receiver maintains a reassembly buffer where out of order packets are held until they can be delivered to the application in order.   The delivery time of packet $k$ is therefore the random variable,
\begin{align}
Y_k = \max\{a_{p_1,s_1},a_{p_2,s_2},\dots,a_{p_k,s_k}\}
\end{align}
It will prove useful later to rewrite this expression equivalently as follows. Let $K_{p,k}:=\{q\in\{1,\dots,k-1\}, p_q=p\}$ denote the set of packets sent on path $p$ with indices lower than that of packet $k$ and
\begin{align}
Y_{p,k} : = \max\{a_{p,s_q}: q\in K_{p,k}\}
\end{align}
We then have that
\begin{align}
Y_k = \max\{Y_{1,k},\dots,Y_{P,k},a_{p_k,s_k}\}
\end{align}

Our aim is thus to schedule packet transmissions (i.e. to select path-slot pairs $(p_k,s_k)$, $k=1,2,\dots$) so as to minimise the mean in-order delivery delay
\begin{align}\label{eq:D1}
D:=\lim_{N\rightarrow\infty}\frac{1}{N}\sum_{k=1}^N \mathbb{E}[Y_k] - t_{p_k,s_k}.   
\end{align}

\subsubsection{Low-complexity Scheduling}

However, finding the optimal schedule that minimises eq. (\ref{eq:D1}) is a complex combinatorial problem. Thus, our goal is to design a scheme that solves the problem  with as low computational complexity as possible in order to support the high bitrates expected from a multipath  protocol. We start with the following lemma.

\begin{lemma}\label{lem:one}
Suppose Assumption \ref{assump:reorder} holds.  To minimise delay, it is sufficient to consider situations where packets on the same path are transmitted in order of ascending index.
\end{lemma}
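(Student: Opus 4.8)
The plan is to use an exchange (interchange) argument. Suppose we are given any schedule that minimises the mean delay $D$ of eq.~(\ref{eq:D1}). If in this schedule there exist two packets $k$ and $k'$ with $k<k'$ that are sent on the same path $p$ but with $k'$ occupying an \emph{earlier} slot than $k$ --- i.e. $p_k=p_{k'}=p$ but $s_{k'}<s_k$ --- then I claim we can swap their path-slot assignments (send packet $k$ in slot $s_{k'}$ and packet $k'$ in slot $s_k$, leaving every other packet untouched) without increasing $D$. Iterating this swap finitely many times on every out-of-order pair on every path yields a schedule of no greater delay in which packets on each path appear in ascending index order, which is exactly the claim.

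First I would fix such a pair $(k,k')$ and analyse the effect of the swap on the arrival times. Since only the assignments of $k$ and $k'$ change, and they merely trade two slots on the same path $p$, the multiset of arrival times $\{a_{p,s_q}\}$ used on path $p$ is unchanged; what changes is only \emph{which packet index} is attached to each of the two arrival times $a_{p,s_{k'}}$ and $a_{p,s_k}$. By Assumption~\ref{assump:reorder}, $s_{k'}<s_k$ implies $a_{p,s_{k'}} < a_{p,s_k}$, so before the swap packet $k$ arrives at the later time and packet $k'$ at the earlier one; after the swap this is reversed, so packet $k$ now arrives at the earlier time and $k'$ at the later one.

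Next I would check the delivery times $Y_j = \max\{a_{p_1,s_1},\dots,a_{p_j,s_j}\}$ for every $j$. For $j<k$ nothing changes, since neither packet $k$ nor $k'$ contributes. For $j\ge k'$, the max is over a set of arrival times that, restricted to path $p$, includes both $a_{p,s_{k'}}$ and $a_{p,s_k}$ regardless of which index they carry; since the other paths' contributions and all other path-$p$ contributions are identical, $Y_j$ is unchanged. The only indices that can move are $k\le j < k'$: here the post-swap schedule replaces the original path-$p$ arrival time $a_{p,s_k}$ (the larger) by $a_{p,s_{k'}}$ (the smaller) in the set defining $Y_j$, possibly also dropping it if some packet with index in $(k,j]$ on path $p$ has an even larger arrival time --- in any case the set defining $Y_j$ after the swap is dominated termwise by the one before, so $Y_j$ does not increase for these $j$ either. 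Hence every $\mathbb{E}[Y_j]$ is non-increasing and, since the transmission times $t_{p_j,s_j}$ are just a permuted multiset that is in fact unchanged in aggregate, $D$ does not increase.

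The main obstacle --- and the point that needs the most care --- is the bookkeeping for the ``middle'' indices $k\le j<k'$: one must verify that removing the larger arrival time and inserting the smaller one into the set over which $Y_j$ is a maximum genuinely cannot raise the maximum, accounting for the fact that intermediate packets on path $p$ with indices between $k$ and $k'$ (if any) still contribute their own arrival times, which by Assumption~\ref{assump:reorder} lie strictly between $a_{p,s_{k'}}$ and $a_{p,s_k}$. Once that monotonicity is nailed down, summing over $j$ and taking the limit in eq.~(\ref{eq:D1}) is immediate, and the finiteness of the number of exchanges (a standard bubble-sort argument on the finitely many inversions within each path over any finite horizon $N$) closes the proof.
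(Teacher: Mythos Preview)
Your proposal is correct and follows essentially the same exchange argument as the paper: both show that swapping two packets on the same path so as to move toward ascending-index order cannot increase any $Y_j$ (only the ``middle'' indices $k\le j<k'$ are affected, and there the larger arrival time $a_{p,s_k}$ is replaced by the smaller $a_{p,s_{k'}}$), and then iterate. The only cosmetic difference is direction---you start from an arbitrary schedule and swap toward ascending order, whereas the paper starts from ascending order and shows that swapping away can only increase the sum-delay---but the core realisation-wise monotonicity argument is identical.
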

\begin{proof}
	See the appendix.
	\end{proof}

Lemma \ref{lem:one} is intuitive and greatly reduces the set of packet schedules that need to be considered, collapsing this down from all combinations of scheduling orders to which subset of packets are sent on each path.


Now, observe that $Y_{p,k}$ does not depend on the indices of the packets in set $K_{p,k}$, but only on the slots in which the packets are sent.   By Lemma \ref{lem:one} packets are transmitted in ascending order on a path
and by Assumption \ref{assum:noempty} no slots are left unused.  Hence,
\begin{align}
Y_{p,k} = \max\{a_{p,s}, s=1,2,\dots,s_{p,k}\}
\end{align}
where $s_{p,k} = \max \{s_q: q\in K_{p,k}\}$.  
Further, by Assumption \ref{assum:delay} the path $K$ delay has an upper bound $\bar{T}_p$ and so older slots $\{s: t_{p,s}+\bar{T}_p \le t_{p,s_{p,k}}\}$ do not affect the value of $Y_{p,k}$.   It is therefore sufficient to calculate the $\max$ over a finite window of slots, \begin{align}
Y_{p,k} = \max\{a_{p,s}, s=\underline{s}_{p,k},\dots,s_{p,k}\}
\end{align}
with $\underline{s}_{p,k} := s_{p,k}-\delta_{p,k}$ and $\delta_{p,k} = s_{p,k}-\max \{s: t_{p,s}+\bar{T}_p \le t_{p,s_{p,k}}\}$.  
By Assumption \ref{assum:slot} the slot duration can be approximated as being constant, $T_p$.  Hence, for slots $s_{p,k}>\lceil\bar{T}_p/T_p\rceil$ sufficiently far away from the starting slot we have $\delta_{p,k} = \delta_p:=\lceil\bar{T}_p/T_p\rceil$.  That is, $\delta_{p,k}$ is constant and it is \emph{sufficient} to calculate $Y_{p,k}$ over a \emph{fixed} window.  This assumption helps us to manage the computational complexity of the scheduling algorithm but can be readily relaxed e.g using a Chernoff bound it can be shown that the probability that $Y_{p,k}$ depends on events occurring prior to the fixed window is small for an appropriate choice of window size. 

Recalling random variables $\Delta_{p,\underline{s}_{p,k},\underline{s}_{p,k}+j} = a_{p,\underline{s}_{p,k}+j}-a_{p,\underline{s}_{p,k}}$, $j=0,1,\dots,\delta_p$, we can rewrite $Y_{p,k}$  as
\begin{align*}
Y_{p,k} = a_{p,\underline{s}_{p,k}} \!\!+\!\! \max\{\Delta_{p,\underline{s}_{p,k},\underline{s}_{p,k}+1}, \dots, \Delta_{p,\underline{s}_{p,k},\underline{s}_{p,k}+\delta_p}\}
\end{align*}
Sequence $\mathbf{\Delta}(\underline{s}_{p,k}):=\{\Delta_{p,\underline{s}_{p,k},\underline{s}_{p,k}+1}, \dots, \Delta_{p,\underline{s}_{p,k},\underline{s}_{p,k}+\delta_p}\}$ is, by Assumption \ref{assum:iid}, i.i.d., i.e. $\mathbf{\Delta}(\underline{s}_{p,k})\sim\mathbf{\Delta}_p$.   We have therefore arrived at the following result,
\begin{theorem}[Invariance of $Y_{p,k} - a_{p,{s}_{p,k}-\delta_p}$]\label{theorem:invariance}
The distribution of $Y_{p,k} - a_{p,{s}_{p,k}-\delta_p}$ is invariant on path $p$ for packets $k$ scheduled in slots $s_{p,k}>\delta_p=\lceil\bar{T}_p/T_p\rceil$.  
\end{theorem}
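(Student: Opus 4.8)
The plan is to observe that essentially all of the work has already been carried out in the derivation preceding the statement, so the proof reduces to assembling those facts and invoking Assumption~\ref{assum:iid} one final time. First I would recall that, by Lemma~\ref{lem:one} together with Assumptions~\ref{assum:noempty},~\ref{assum:delay} and~\ref{assum:slot}, any packet $k$ with $s_{p,k}>\delta_p=\lceil\bar{T}_p/T_p\rceil$ has $\delta_{p,k}=\delta_p$ constant and $\underline{s}_{p,k}=s_{p,k}-\delta_p\ge 1$, which yields the closed form
\begin{align*}
Y_{p,k}=a_{p,s_{p,k}-\delta_p}+\max\{\Delta_{p,\underline{s}_{p,k},\underline{s}_{p,k}+1},\dots,\Delta_{p,\underline{s}_{p,k},\underline{s}_{p,k}+\delta_p}\}.
\end{align*}
Hence $Y_{p,k}-a_{p,s_{p,k}-\delta_p}=g\big(\mathbf{\Delta}(\underline{s}_{p,k})\big)$, where $g(x_1,\dots,x_{\delta_p})=\max_j x_j$ is a fixed measurable map and $\mathbf{\Delta}(\underline{s}_{p,k})$ is the random vector whose $j$-th coordinate is the partial sum $\Delta_{p,\underline{s}_{p,k},\underline{s}_{p,k}+j}=\sum_{r=1}^{j}\Delta_{p,\underline{s}_{p,k}+r-1,\underline{s}_{p,k}+r}$ of $\delta_p$ consecutive one-step increments.

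The second step is to show that the law of $\mathbf{\Delta}(\underline{s}_{p,k})$ does not depend on the starting index $\underline{s}_{p,k}$. By Assumption~\ref{assum:iid} the one-step increments $\{\Delta_{p,s,s+1}\}_{s\ge 1}$ are i.i.d.\ with common law $\Delta_p$; therefore any block $(\Delta_{p,\underline{s}_{p,k}+r-1,\underline{s}_{p,k}+r})_{r=1}^{\delta_p}$ of $\delta_p$ consecutive increments has the $\delta_p$-fold product law $\Delta_p^{\otimes\delta_p}$, irrespective of where the block begins. Since $\mathbf{\Delta}(\underline{s}_{p,k})$ is a fixed linear image (the partial-sum map) of that block, its law is likewise a fixed distribution $\mathbf{\Delta}_p$ for every admissible $k$, and composing with the fixed map $g$ preserves this: $Y_{p,k}-a_{p,s_{p,k}-\delta_p}\sim g(\mathbf{\Delta}_p)$. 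As two admissible packets $k$ differ only through the value of $\underline{s}_{p,k}$, this establishes the claimed invariance.

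I expect the only thing requiring care to be bookkeeping rather than any analytic subtlety: one must use the hypothesis $s_{p,k}>\delta_p$ precisely to (i)~freeze the window length $\delta_{p,k}$ at the constant $\delta_p$ and (ii)~keep $\underline{s}_{p,k}\ge 1$ a valid slot index, and one must resist reading ``$\mathbf{\Delta}(\underline{s}_{p,k})$ is i.i.d.'' as asserting independence of its coordinates --- they are partial sums and hence dependent --- rather than what is actually used, namely that its \emph{joint} law is shift-invariant, which is exactly the i.i.d.\ property of the underlying increments supplied by Assumption~\ref{assum:iid}.
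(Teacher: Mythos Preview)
Your proposal is correct and mirrors the paper's own argument essentially verbatim: the paper does not give a separate proof of this theorem but rather derives it in the running text, arriving at exactly the decomposition $Y_{p,k}=a_{p,\underline{s}_{p,k}}+\max\mathbf{\Delta}(\underline{s}_{p,k})$ and then invoking Assumption~\ref{assum:iid} to assert $\mathbf{\Delta}(\underline{s}_{p,k})\sim\mathbf{\Delta}_p$ independently of $\underline{s}_{p,k}$. Your added remark clarifying that ``$\mathbf{\Delta}(\underline{s}_{p,k})$ is i.i.d.'' should be read as shift-invariance of the \emph{joint} law of the partial-sum vector (not mutual independence of its coordinates) is a worthwhile sharpening of a phrase the paper leaves slightly loose.
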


That is, we can let $Z_p$ describe the distribution of $Y_{p,k} - a_{p,{s}_{p,k}-\delta_p} \sim Z_p\ \forall k$ such that $s_{p,k}>\delta_p=\lceil\bar{T}_p/T_p\rceil$.
Theorem \ref{theorem:invariance} helps us to greatly reduce the complexity of our scheduler as we don't have to recompute ${Z_p}$ for every packet schedule. Moreover, note that the distribution of $a_{p,{s}_{p,k}}$ and $Z_{p}$ can usually be readily estimated in an online fashion.   For example, the realisation of $a_{p,{s}_{p,k}-\delta_p}$ may already be known via ACK feedback from the receiver.  Otherwise, we can use past observations of packet delivery times on path $p$ to estimate the distribution of $a_{p,{s}_{p,k}-\delta_p}$.   
We can also use past observations to estimate the distribution of $\mathbf{\Delta}_p$.   To reduce the number of observations required to estimate the distribution and to reduce computational/memory cost, we can also use a parametric model and estimate the parameters using past observations. For instance, when using a Gaussian model we can find a reasonably accurate approximation of $Z_p$ using \cite{max_approx}.


\subsubsection{Stochastic Earliest Delivery Path First}
Using Theorem \ref{theorem:invariance}, we can rewrite eq.~(\ref{eq:D1}) as
\begin{align*}
D&= \lim_{N\rightarrow\infty}\frac{1}{N}\sum_{k=1}^N  \mathbb{E}[\max\{Z_{1}+a_{1,s_{1,k}-\delta_1}- t_{p_k,s_k},\dots,\\
&~~~~~~~~~~~~~~~~~~~~~~~~~Z_{P} + a_{P,s_{P,k}-\delta_P}- t_{p_k,s_k},D_{p_k,s_k}\}]
\end{align*}
where $D_{p,s}:=a_{p,s}-t_{p,s}$. 
By Assumption \ref{assum:noempty} and Lemma \ref{lem:one}, minimising $D$ is equal to minimising 
\begin{align*}
&\lim_{N\rightarrow\infty}\frac{1}{N}\sum_{k=1}^N  \mathbb{E}[\max\{Z_{1}+a_{1,s_{1,k}-\delta_1},\dots,\\
&~~~~~~~~~~~~~~~~~~~~~~~~~Z_{P} + a_{P,s_{P,k}-\delta_P},a_{p_k,s_k}\}] 
\end{align*}
and the minimum delay schedule is to greedily select $(p_k,s_k)$ to minimise $\mathbb{E}[\max\{Z_{1}+a_{1,s_{1,k}-\delta_1},\cdots,Z_{P} + a_{P,s_{P,k}-\delta_P},a_{p_k,s_k}\}]$.   We thus propose  Stochastic Earliest Delivery Path First (S-EDPF) to  schedule the transmission of packet $k$ in the first available slot of such path $p_k^*$  with minimum expected reordering latency, i.e.,
\begin{align}\label{eq:scheduler}
  &p_k^*  = \argmin_{p_k}  \mathbb{E}[\max\{Z_{1}+a_{1,s_{1,k}-\delta_1},\dots,\nonumber\\
  &~~~~~~~~~~~~~~~~~~~~~~~~~Z_{P} + a_{P,s_{P,k}-\delta_P},a_{p_k,s_k}\}] 
\end{align}
\section{Low Delay Streaming Code}
\label{sec:code}

In the event of a loss, which is frequent in wireless links, the usual recovery method, ARQ,  costs extra delay because of the round trip time that it takes for the repeat request to be delivered to the transmitter and the message to be retransmitted to the receiver. 
Neither LowRTT (MPTCP) nor EDPF use loss information to schedule non-retransmitted packets, though they both preferentially send data on links with higher rate. If these links were lossy, these scheduling choices could do more harm than good to the overall performance. Our approach is to precode enough information in coded packets to anticipate losses and help the receiver recover such packets without a need for retransmission.






Similarly as above, we divide time into slots $\mathcal{S}=\{1,2,\dots\}$ each corresponding to the transmission of one packet. S-EDPF generates a \emph{coded packet} $c_{k}$ and schedules it every $\tau$ slots (which we refer to as the \emph{coding interval}). A coded packet $c_k$  is a random linear combination of information packets 1 to $N_\tau k$, where $N_\tau = \tau-1$ is the number of uncoded packets $u_i$ sent between $c_k$ and $c_{k-1}$ across all subpaths, i.e., 
\begin{align}
c_k = f_{ \tau}(u_1, u_2 , \dots ,u_{N_\tau \cdot k } ) := \sum_{j=1}^{N_\tau\cdot k} w_{k j} \cdot u_j
\end{align}
with coefficients $w_{k j}$, selected identically and independently, uniformly at random from a finite field of size $Q$.  At the receiver, upon reception of $c_k$, a matrix $G_k$ is constructed with rows formed from the coefficients of the received packets (normal uncoded packets adding a row with a $1$ in the corresponding diagonal, all other entries being $0$).  
Decoding can then be carried out on-the-fly using e.g. Gaussian elimination.   In our analysis we will make the standing assumption that the field size $Q$ is sufficiently large that with probability one each coded packet helps the receiver recover from one information packet erasure.  That is, each coded packet row added to generator matrix $G_k$ increases the rank of $G_k$ by one. Note however that we evaluate a real decoder (with a non-null decoding failure probability) in \S\ref{sec:performance}. 

In this way, the task of S-EDPF is to \emph{jointly schedule} the transmission of coded and information packets leveraging the path selector proposed in \S\ref{sec:scheduling} (that minimises out-of-order arrivals).

\subsection{Buffering Delay Analysis}
\label{sec:delay-analysis}

\begin{figure}[t!]
\includegraphics[width=\columnwidth]{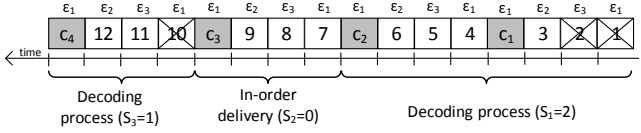}
\vspace{-8mm}
\caption{Coding scheme.}
\label{fig:coding-scheme}
\vspace{-5mm}
\end{figure}

In this section, we characterise the buffering delay performance of this coding scheme when delays are fixed.\footnote{We assess this coding scheme in the presence of random delays experimentally in \S\ref{sec:performance} and leave the mathematical analysis for future work.}
Information packets are delivered to the application in order as they arrive (without buffering delay), until there is a loss. We name this state as ``in-order delivery''. When there is a loss, a ``decoding process'' starts: packets are buffered until the decoder has enough coded packets to fill the gaps (losses), at which point \emph{in-order delivery} resumes. Following \cite{karzand-2014}, we index each ``in-order delivery''/``decoding process'' period with $j=1,2,\dots$, and let the random variable $S_j$ count the coded packets required in $j$ to resume an in-order delivery state ($S_j=0$ if stage $j$ is already in the in-order delivery state). The above is illustrated in Fig.~\ref{fig:coding-scheme} for $\tau=4$ and packets being received from $P=3$ subpaths with erasure probability $\epsilon_1$, $\epsilon_2$, and $\epsilon_3$, respectively.

\begin{theorem}[S process]\label{theorem:S-process}
Suppose we transmit information packets across $\mathcal{P}=\{1,\dots,P\}$ independent subpaths with erasure probability $\{\epsilon_1,\dots,\epsilon_P\}$ and fixed  delays. Suppose we insert a coded packet on path $p_c\in\mathcal{P}$ (with erasure probability $\epsilon_{p_c}$) in between every $N_\tau=\tau-1$ information packets. Assume that each coded packet can help us to recover from one erasure, irrespective of the subpath where the erasure has occurred.  Denote by $N_{\tau,p}$ the number of packets sent in subpath $p$ between two coded packets (including last coded packet if sent on $p$), i.e.,  $\sum_{p\in\mathcal{P}} N_{\tau,p}=\tau$. 
Then, we have: 

\begin{enumerate}

\item For all $\epsilon_p$ and $N_{\tau,p}$ such that $\sum_{p\in\mathcal{P}} N_{\tau,p} \epsilon_p < 1$, the mean of the probability distribution of $S$ exists and is finite. 

\item The distribution of $S$ is characterised by:
\begin{align}
P(S\!\!=\!\!0) & = (1\!\!-\!\!\epsilon_{p_c})^{N_{\tau,p_c}\!-\!1} \prod_{i\ne p_c} (1\!\!-\!\!\epsilon_i)^{N_{\tau,i}} &\\ 
P(S\!\!=\!\!1) & = (N_{\tau,p_c}\!\!-\!\!1)\epsilon_{p_c}(1\!\!-\!\!\epsilon_{p_c})^{N_{\tau,p_c}\!-\!1}\prod_{j\ne p_c}(1\!\!-\!\!\epsilon_j)^{N_{\tau,j}} + \nonumber\\
       & + \sum_{i \ne p_c}  N_{\tau,i}\epsilon_i(1-\epsilon_i)^{N_{\tau,i}\!-\!1} \prod_{j \ne i} (1-\epsilon_j)^{N_{\tau,j}}  \\
P(S\!\!=\!\!k) & \!\approx\!  \frac{N_\tau}{k}    \bar\epsilon^{k} (1\!\!-\!\!\bar\epsilon)^{kN_\tau} {(k-1)\tau \choose k-1},\ \forall k\!>\!1 \nonumber \label{eq:distrib_S:tail}\\
\end{align}
where $\bar\epsilon = \frac{\sum_{p\in\mathcal{P}} N_{\tau,p}\epsilon_p}{ N_\tau }$.


\item The first and second moments of $S$ can be approximated by the following closed-form expressions:
\begin{align}
\!\!\!\!\mathbb{E}[S]\!\! &\approx\!\! P(S\!\!=\!\!1)\!\! + \!\!\frac{\tau(N_\tau)\bar\epsilon^2(1\!\!-\!\!\bar\epsilon)^{N_\tau}}{1\!\!-\!\!\tau\bar\epsilon} &\label{eq:distrib_S:1st_moment}\\
\!\!\!\!\mathbb{E}[S^2]\!\! &\approx\!\! P(S\!\!=\!\!1)\!\! +\!\!(1\!\!-\!\!\bar\epsilon\!\!+\!\!(1\!\!-\!\tau\bar\epsilon)^2)\frac{\tau N_\tau\bar\epsilon^2(1\!\!-\!\!\bar\epsilon)^{N_\tau}
}{(1\!\!-\!\!\tau\bar\epsilon)^3} \label{eq:distrib_S:2nd_moment}
\end{align}
where $\bar\epsilon= \frac{\sum_{p\in\mathcal{P}} N_{\tau,p}\epsilon_p}{ N_\tau }$.

\end{enumerate}
\end{theorem}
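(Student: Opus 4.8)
The plan is to recast the alternation between the ``in-order delivery'' and ``decoding process'' states as an excursion of a random walk with negative drift \textemdash{} equivalently, the total progeny of a sub-critical branching process \textemdash{} and then to read off the three claims, adapting the single-path analysis of~\cite{karzand-2014} to the multipath setting. First I would index the coding intervals by $m=1,2,\dots$, interval $m$ carrying $N_\tau$ information packets and one coded packet split as $N_{\tau,p}$ packets on path $p$, so that path $p_c$ carries its coded packet together with $N_{\tau,p_c}-1$ information packets while each $i\ne p_c$ carries $N_{\tau,i}$ information packets. Defining the \emph{deficit} after interval $m$ as the number of erased information packets not yet recoverable (by the standing rank assumption each received coded packet lowers a positive deficit by one and is redundant otherwise), the deficit changes over interval $m$ by a random variable $X_m$ equal to the number of information-packet erasures in interval $m$ minus the indicator that interval $m$'s coded packet is received; $X_m$ is bounded above by $N_\tau$ and has mean
\begin{align*}
\mathbb{E}[X_m]=\Big(\sum_{i\ne p_c}N_{\tau,i}\epsilon_i+(N_{\tau,p_c}-1)\epsilon_{p_c}\Big)-(1-\epsilon_{p_c})=\sum_{p\in\mathcal{P}}N_{\tau,p}\epsilon_p-1 .
\end{align*}
A ``decoding process'' is then exactly an excursion of the deficit above $0$ started by the erasure that raises it to $1$, and $S$ is simultaneously the number of coded packets consumed and the number of information-packet erasures occurring during that excursion. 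Part~1 follows immediately: the hypothesis $\sum_p N_{\tau,p}\epsilon_p<1$ is precisely $\mathbb{E}[X_m]<0$, and since the $X_m$ are i.i.d.\ and bounded above, a standard random-walk argument (Wald's identity for the first-passage time to $0$, or a Foster--Lyapunov drift argument) shows that the excursion length has finite mean, hence $\mathbb{E}[S]<\infty$ \textemdash{} equivalently, the associated branching process has mean offspring $\sum_p N_{\tau,p}\epsilon_p<1$ and so finite expected total progeny.

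For Part~2, $P(S=0)$ is the probability that the interval suffers no information-packet erasure at all, i.e.\ $(1-\epsilon_{p_c})^{N_{\tau,p_c}-1}\prod_{i\ne p_c}(1-\epsilon_i)^{N_{\tau,i}}$, the exponent on $p_c$ counting its $N_{\tau,p_c}-2$ remaining information packets together with its coded packet; $P(S=1)$ enumerates the two ways a single information erasure is cleared within its own interval \textemdash{} the erased packet on $p_c$ ($N_{\tau,p_c}-1$ choices, probability $\epsilon_{p_c}(1-\epsilon_{p_c})^{N_{\tau,p_c}-1}$ on that path, all other paths received), or on some $i\ne p_c$ ($N_{\tau,i}$ choices, probability $\epsilon_i(1-\epsilon_i)^{N_{\tau,i}-1}$, all other paths including $p_c$'s coded packet received) \textemdash{} which sum to the stated formula. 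For $k>1$ I would invoke the cycle-lemma / Otter--Dwass total-progeny identity $P(S=k)=\tfrac1k P(\Xi_1+\dots+\Xi_k=k-1)$, with $\Xi_j$ the number of erasures generated during the $j$-th served interval, and then approximate: replace the per-path erasure counts by Binomials with the single averaged parameter $\bar\epsilon=\big(\sum_p N_{\tau,p}\epsilon_p\big)/N_\tau$ (collapsing the per-path structure and the loss of coded packets and neglecting boundary effects at the ends of an excursion), so that across the $k$ intervals the $k\tau$ slots split into $k$ erasure events of probability $\bar\epsilon$ and $kN_\tau$ receptions, and simplify the resulting count to $\binom{(k-1)\tau}{k-1}$ with prefactor $N_\tau/k$ (the $N_\tau$ for the position of the initiating erasure within its interval, the $1/k$ from the cycle lemma). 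This gives $P(S=k)\approx\frac{N_\tau}{k}\bar\epsilon^{k}(1-\bar\epsilon)^{kN_\tau}\binom{(k-1)\tau}{k-1}$.

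For Part~3, I would substitute this distribution into $\mathbb{E}[S]=\sum_{k\ge1}kP(S=k)$ and $\mathbb{E}[S^2]=\sum_{k\ge1}k^2P(S=k)$: the $k=0$ term drops, the $k=1$ term contributes $P(S=1)$, the tail factor $k$ cancels the $1/k$ (leaving a spare factor $k$ for the second moment), and the remaining series $\sum_{k\ge2}\binom{(k-1)\tau}{k-1}y^{k}$ and $\sum_{k\ge2}(k-1)\binom{(k-1)\tau}{k-1}y^{k}$, with $y=\bar\epsilon(1-\bar\epsilon)^{N_\tau}$, can be summed \textemdash{} exactly via the generalized binomial series (the root of $B=1+yB^{\tau}$ and its derivative, by Lagrange inversion), or approximately by a geometric expansion in $\tau\bar\epsilon$ \textemdash{} and to leading order collapse to the stated closed forms with denominators $1-\tau\bar\epsilon$ and $(1-\tau\bar\epsilon)^{3}$, i.e.\ eqs.~(\ref{eq:distrib_S:1st_moment}) and~(\ref{eq:distrib_S:2nd_moment}). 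The hard part is Part~2 for $k>1$, and with it Part~3: making the branching / cycle-lemma representation precise and, above all, controlling the approximations \textemdash{} reducing the multi-type per-path erasure process to a single averaged parameter $\bar\epsilon$, discarding the boundary effects of an excursion, and simplifying the exact binomial counts and exponents to the compact stated forms \textemdash{} so that the errors are negligible in the intended regime of small erasure rates with $\tau\bar\epsilon\ll1$; granting those reductions, Part~1 and the generating-function evaluation in Part~3 are routine.
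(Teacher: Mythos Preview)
Your proposal is correct and overlaps substantially with the paper's proof, but differs in two places worth noting. For Part~1, the paper bounds $P(S>k)$ by $P(E_k>k)$, where $E_k$ is the total number of erasures in $k$ frames, and then invokes the Lyapunov CLT to argue that this tail vanishes when $\sum_p N_{\tau,p}\epsilon_p<1$; your negative-drift / sub-critical branching argument via Wald's identity is cleaner and more directly yields finiteness of the mean (the paper's CLT step, as written, only gives that the tail tends to zero, not that it is summable). For Part~2 with $k>1$ the two approaches coincide: the paper cites the hitting-time formula of~\cite{karzand-2014}, which is precisely your cycle-lemma / Otter--Dwass identity, and then replaces the Poisson-binomial law of the per-interval erasure count by a binomial with parameter $\bar\epsilon$, exactly as you do. For Part~3 the paper takes a shortcut you do not: rather than summing $\sum_{k\ge2}k^{r}P(S=k)$ via the generalized binomial series or Lagrange inversion, it introduces the auxiliary single-path process $S_1$ with homogeneous erasure rate $\bar\epsilon$, imports its closed-form moments $\mathbb{E}[S_1]$ and $\mathbb{E}[S_1^2]$ from~\cite{karzand-2014}, and writes $\mathbb{E}[S]\approx P(S=1)+\mathbb{E}[S_1]-P(S_1=1)$ (and analogously for the second moment). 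Your route is more self-contained; the paper's is shorter but leans on the single-path moments being already available. One small slip: in your justification of $P(S=0)$ the exponent $N_{\tau,p_c}-1$ counts the $N_{\tau,p_c}-1$ information packets on path $p_c$, not ``$N_{\tau,p_c}-2$ information packets together with the coded packet''; whether the coded packet is received is irrelevant to the event $\{S=0\}$.
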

\begin{proof}
	See the appendix.
\end{proof}

In this theorem we approximate \emph{the tail} of the distribution of $S$ (see eq.~(\ref{eq:distrib_S:tail})) because the exact solution requires running a complex iterative method to compute probabilities from a poisson-binomial distribution~\cite{ehm-poisson_binomial}.  In more detail, we approximate a poisson-binomial distribution with a binomial distribution (see the proof of Theorem~\ref{theorem:S-process}.2) which is more tractable, although not accurate in general. However, we find that it provides a good approximation of the complete distribution of $S$ because we only apply it to its tail (i.e. in $P(S=k),\ \forall k>1$), which does not contribute much to the whole distribution, i.e., $\sum_{k=2}^\infty P(S=k) \ll \sum_{k=0}^{1}P(S=k)$, in most of the cases of interest. We confirm its accuracy by means of simulations below. 
%
%
%
%

 \subsubsection{Buffering delay}

Under the assumption of fixed delays, we can now characterise the in-order delivery delay of an optimal earliest arrival delivery scheduler using Theorem \ref{theorem:S-process}. 
Let us define a \emph{frame} as the transmission of $N_{\tau}$ information packets plus a coded packet. The scheduler in each frame assigns $N_{\tau, p}$ packets to each subpath $p$ such that  $\tau = \sum_{p\in\mathcal{P}} N_{\tau,p}$.  
The relationship between the S process and  buffering delay due to losses is as follows.

\begin{theorem}[Buffering delay]\label{theorem:delayS}
	At the receiver, the asymptotic mean buffering delay per transmitted packet is upper bounded by $$\frac{\mathbb{E}[S^2]}{ 2{\tau}({\mathbb{E}}[S] + P(S=0))} \sum_{p\in\mathcal{P}}  \max \left \{ N_{\tau,p}(N_{\tau,p}-1), 1 \right \} \Delta_p$$
	time units.

\end{theorem}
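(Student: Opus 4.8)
The plan is to relate the random waiting time suffered by a packet to the $S$-process via a renewal-reward argument, and then bound the per-packet contribution of each ``decoding process'' period using the fixed-delay structure. First I would set up the renewal structure: by Theorem~\ref{theorem:S-process}, the sequence of ``in-order delivery''/``decoding process'' periods $j=1,2,\dots$ is a renewal process, where each period $j$ requires $S_j$ coded packets to terminate (with $S_j=0$ meaning the period is purely in-order delivery and contributes a single frame). Since one coded packet is emitted every $\tau$ slots, a decoding period that needs $S_j$ coded packets spans $S_j$ frames, i.e. $S_j\tau$ transmitted packets, while an $S_j=0$ period spans exactly one frame ($\tau$ packets). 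Hence the expected number of transmitted packets per renewal period is $\tau(\mathbb{E}[S]+P(S=0))$, which is precisely the denominator appearing in the statement (up to the factor $2$, which will come from the triangular-number sum over buffered packets). The renewal-reward theorem then says the asymptotic mean buffering delay per transmitted packet equals $\mathbb{E}[R]/\big(\tau(\mathbb{E}[S]+P(S=0))\big)$, where $R$ is the total buffering delay accumulated by all packets during one renewal period.

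Next I would bound $\mathbb{E}[R]$. Within a decoding period that lasts $S_j$ frames, packets that arrive must wait buffered at the receiver until enough coded packets have arrived to fill all gaps; because delays on each path are \emph{fixed} (equal to some constant offset plus the inter-arrival spacing $\Delta_p$ on path $p$), the waiting time of the $m$-th buffered packet on path $p$ is at most $(m-1)\Delta_p$ — it must wait for at most $m-1$ further arrivals on that path to be resolved, each spaced $\Delta_p$ apart. Within a single frame path $p$ carries $N_{\tau,p}$ packets, so over $S_j$ frames it carries $S_j N_{\tau,p}$ packets, and the accumulated delay on path $p$ is at most $\Delta_p \sum_{m=1}^{S_j N_{\tau,p}} (m-1) = \Delta_p \binom{S_j N_{\tau,p}}{2} \le \tfrac{1}{2}\Delta_p (S_j N_{\tau,p})^2$. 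Summing over paths and taking expectations gives $\mathbb{E}[R] \le \tfrac{1}{2}\mathbb{E}[S^2]\sum_{p} N_{\tau,p}^2 \Delta_p$ as a first cut. To sharpen this to the stated $\max\{N_{\tau,p}(N_{\tau,p}-1),1\}$ form, I would be more careful: a packet that is the \emph{last} arrival on a path within the relevant window incurs no further wait, which replaces $N_{\tau,p}^2$ by $N_{\tau,p}(N_{\tau,p}-1)$; and when $N_{\tau,p}\le 1$ the crude triangular bound degenerates, so the $\max$ with $1$ absorbs the residual single-packet wait of at most $\Delta_p$ per period. Dividing by the renewal-period length recovers the claimed expression.

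The main obstacle I anticipate is making the per-period delay bound rigorous when the decoding period straddles multiple frames and packets from \emph{different} paths interleave: one must argue that a buffered packet only ever waits for arrivals that are ``ahead of it in the in-order sequence but not yet decodable'', and that the number of such blocking arrivals on each path $p$ is controlled by $S_j N_{\tau,p}$ rather than by the total backlog across all paths. The cleanest way to handle this is to observe that, under fixed delays and the earliest-arrival scheduler, once the $S_j$-th coded packet arrives the entire gap is resolved simultaneously, so the delay of each buffered packet is simply (arrival time of the resolving event) minus (its own arrival time), and both are deterministic functions of the path spacings $\Delta_p$ and the counts $N_{\tau,p}$; bounding the former by the arrival time of the last of the $S_j\tau$ packets in the period and the latter from below by that packet's position gives the triangular sum. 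A secondary subtlety is the use of $\mathbb{E}[S^2]$ rather than $(\mathbb{E}[S])^2$: this is forced because $R$ is quadratic in $S_j$, so the second moment from Theorem~\ref{theorem:S-process}.3 is exactly what is needed, and no Jensen-type slack is lost.
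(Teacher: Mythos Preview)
Your renewal--reward skeleton is exactly the paper's: cycles of length $\max\{S_j,1\}$ frames (giving the denominator $\tau(\mathbb{E}[S]+P(S=0))$), reward equal to the accumulated buffering delay over a cycle, and a triangular sum of waiting times that is quadratic in $S_j$ so that $\mathbb{E}[S^2]$ appears. The one place you misidentify the mechanism is the sharpening from $N_{\tau,p}^2$ to $N_{\tau,p}(N_{\tau,p}-1)$. Dropping ``the last arrival'' only removes a single term from a sum of $S_jN_{\tau,p}$ terms and cannot convert $(S_jN_{\tau,p})^2$ into $S_j^2\,N_{\tau,p}(N_{\tau,p}-1)$. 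What the paper does instead is subtract the delays incurred by the $S_j$ \emph{coded} packets, which sit at positions $N_{\tau,p},2N_{\tau,p},\dots,S_jN_{\tau,p}$ within the busy period and do not count toward information-packet buffering delay; this removes $N_{\tau,p}\sum_{k=0}^{S_j-1}k$ from the full triangular sum $\sum_{k=0}^{S_jN_{\tau,p}-1}k$, and the difference collapses to $\tfrac{1}{2}S_j^2\,N_{\tau,p}(N_{\tau,p}-1)$ exactly. With that correction your argument matches the paper's.
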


\begin{proof}
See the appendix.
\end{proof}

To illustrate the above, consider a scenario with 2 subpaths having equal rate and packet erasure probabilities $\epsilon_1$ and $\epsilon_2 = \{0, 0.2\epsilon_1, 0.4\epsilon_1, 0.6\epsilon_1, 0.8\epsilon_1, \epsilon_1 \}$, respectively. First, we compute the bound on packet buffering delay given by Theorem~\ref{theorem:delayS} for different coding intervals; second, we simulate these scenarios with a custom event-driven simulator and measure the mean buffering delay; and finally we compare both results in Fig.~\ref{fig:approx:delay}. From the figure, we conclude that  Theorem~\ref{theorem:delayS} predicts buffering delay tightly and therefore it also helps to validate the approximations used in Theorem \ref{theorem:S-process}.

\begin{figure}[t!]
\centering
\vspace{-5mm}
\includegraphics[width=\columnwidth ]{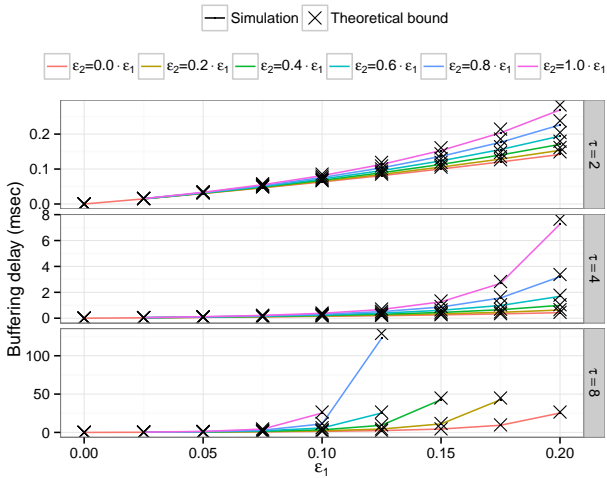}
\vspace{-10mm}
\caption{ Buffering delay. Simulation results \emph{vs.} theoretical upper bound (Theorem~\ref{theorem:delayS}). }
\vspace{-5mm}
\label{fig:approx:delay}
\end{figure}


\subsection{Path selection for coded packets}

The above is a very useful tool for taking decisions as to when to schedule transmission of coded packets. In addition, the next result tells us how we can best exploit the degree of freedom given by multiple subpaths.

\begin{theorem}[Path choice for coded packets]\label{theorem:coded-path}
Under the conditions of Theorem~\ref{theorem:S-process}, for a given $N_{\tau,p}$,  buffering delay is minimised when coded packets are scheduled in such path $p_c$ with highest erasure probability.

\end{theorem}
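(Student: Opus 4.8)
The plan is to exploit the structure of the buffering-delay bound in Theorem~\ref{theorem:delayS}. For a fixed vector $(N_{\tau,1},\dots,N_{\tau,P})$ the sum $\sum_{p\in\mathcal{P}} \max\{N_{\tau,p}(N_{\tau,p}-1),1\}\Delta_p$ is a constant that does not depend on which path carries the coded packet; only the scalar prefactor $\frac{\mathbb{E}[S^2]}{2\tau(\mathbb{E}[S]+P(S=0))}$ depends on the choice $p_c$ through the way $\bar\epsilon$, $P(S=0)$, $P(S=1)$ and hence the moments of $S$ are assembled. So the first step is to reduce the claim to showing that the map $p_c \mapsto \frac{\mathbb{E}[S^2]}{2\tau(\mathbb{E}[S]+P(S=0))}$ is minimised by putting $p_c = \argmax_p \epsilon_p$.

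Next I would examine how $p_c$ enters the relevant quantities. Crucially, $\bar\epsilon = \frac{1}{N_\tau}\sum_p N_{\tau,p}\epsilon_p$ is \emph{invariant} under reassigning which path is ``the coded one'': the coded packet occupies one of the $N_{\tau,p_c}$ slots already counted on path $p_c$, so $\sum_p N_{\tau,p}\epsilon_p$ and $\bar\epsilon$ stay fixed as $p_c$ ranges over $\mathcal{P}$ with the $N_{\tau,p}$ held fixed. Therefore the tail probabilities $P(S=k)$ for $k>1$, which by Theorem~\ref{theorem:S-process} depend on $p_c$ only through $\bar\epsilon$ and $\tau$, are unchanged. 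The only terms that move are $P(S=0)$ and $P(S=1)$. Writing $\Pi := \prod_{i}(1-\epsilon_i)^{N_{\tau,i}}$ (also $p_c$-invariant), one has $P(S=0) = \Pi/(1-\epsilon_{p_c})$ and, from the stated formula, $P(S=1) = \Pi\big[\tfrac{(N_{\tau,p_c}-1)\epsilon_{p_c}}{1-\epsilon_{p_c}} + \sum_{i\ne p_c}\tfrac{N_{\tau,i}\epsilon_i}{1-\epsilon_i}\big] = \Pi\big[\sum_i \tfrac{N_{\tau,i}\epsilon_i}{1-\epsilon_i} - \tfrac{\epsilon_{p_c}}{1-\epsilon_{p_c}}\big]$. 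Both $P(S=0)$ and $P(S=1)$ are thus monotone \emph{decreasing} in $\epsilon_{p_c}$ (since $\epsilon\mapsto 1/(1-\epsilon)$ and $\epsilon\mapsto\epsilon/(1-\epsilon)$ are increasing, and they enter with the right signs), so choosing $p_c$ to be the path of maximal erasure probability simultaneously \emph{decreases} $P(S=0)$ and $P(S=1)$ while leaving the tail untouched.

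It then remains to check that decreasing $P(S=0)$ and $P(S=1)$ (with everything else fixed) decreases the prefactor $g := \frac{\mathbb{E}[S^2]}{2\tau(\mathbb{E}[S]+P(S=0))}$. Here $\mathbb{E}[S] = P(S=1) + T_1$ and $\mathbb{E}[S^2] = P(S=1) + T_2$ with $T_1,T_2$ the tail contributions $\sum_{k>1}kP(S=k)$, $\sum_{k>1}k^2P(S=k)$, which are $p_c$-invariant. So $g = \frac{P(S=1)+T_2}{2\tau(P(S=1)+P(S=0)+T_1)}$. Differentiating (or comparing two values) in the two independent variables $x:=P(S=1)$ and $y:=P(S=0)$: $g$ is decreasing in $y$ trivially, and $\partial g/\partial x$ has the sign of $(P(S=0)+T_1) - (T_2 - P(S=1)\cdot 0)$ — more precisely of $2\tau\big[(y+T_1) - (x+T_2)\big]$ after clearing the positive denominator, i.e. $g$ decreases in $x$ whenever $T_2 > y + T_1$, i.e. whenever $\mathbb{E}[S^2]$ exceeds $\mathbb{E}[S] + P(S=0) - 2P(S=1)$; since $S$ is a nonnegative integer variable, $\mathbb{E}[S^2]\ge \mathbb{E}[S]$, and under the working regime $\bar\epsilon$ small the tail dominates the gap, so this inequality holds. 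Assembling: moving the coded packet to the lossiest path lowers both $x$ and $y$, and $g$ is (weakly) decreasing in each under the stated conditions, hence the buffering-delay bound is minimised at $p_c=\argmax_p\epsilon_p$.

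The main obstacle I anticipate is the last step: verifying rigorously that $g$ is monotone in $P(S=1)$ over the whole admissible range rather than only in the small-$\bar\epsilon$ regime where the approximations of Theorem~\ref{theorem:S-process} were derived. One clean way around it is to argue directly at the level of the exact $S$-distribution (before the binomial approximation): reassigning $p_c$ to a lossier path stochastically \emph{increases} $S$ in the usual stochastic order while the normalising renewal length $\mathbb{E}[S]+P(S=0)$ moves in a controlled way, and then show the ratio defining mean delay is monotone — but making that coupling precise, and confirming the direction of monotonicity of the ratio $\mathbb{E}[S^2]/(\mathbb{E}[S]+P(S=0))$ under such a shift, is the delicate part and is where I would concentrate the effort.
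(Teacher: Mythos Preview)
Your overall strategy---fixing the $(N_{\tau,p})$-dependent factor, noting that the tail $P(S=k)$ for $k>1$ depends only on $\bar\epsilon$ (hence is $p_c$-invariant), and reducing to the prefactor $g=\mathbb{E}[S^2]/\big(2\tau(\mathbb{E}[S]+P(S=0))\big)$---is sound and close to the paper's. But there is a concrete error and a missed simplification that together derail the argument.

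\textbf{Sign error.} From your own formula $P(S=0)=\Pi/(1-\epsilon_{p_c})$ with $\Pi$ invariant, $P(S=0)$ is \emph{increasing} in $\epsilon_{p_c}$ (since $\epsilon\mapsto 1/(1-\epsilon)$ is increasing), not decreasing. Your conclusion that ``both $P(S=0)$ and $P(S=1)$ are monotone decreasing'' is wrong for $P(S=0)$; only $P(S=1)$ decreases. This is exactly what the paper shows by the ratio
\[
\frac{P(S_1=0)}{P(S_2=0)}=\frac{1-\epsilon_{p_2}}{1-\epsilon_{1}}\ge 1 .
\]

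\textbf{Missed simplification (and how the paper finishes).} From your expressions, $P(S=0)+P(S=1)=\Pi\big[1+\sum_i N_{\tau,i}\epsilon_i/(1-\epsilon_i)\big]$, which is $p_c$-invariant. Equivalently, since the tail is invariant and probabilities sum to one, $P(S=0)+P(S=1)$ is constant across choices of $p_c$---this is precisely the paper's pivot. With your notation $x=P(S=1)$, $y=P(S=0)$, $T_1=\sum_{k>1}kP(S=k)$, $T_2=\sum_{k>1}k^2P(S=k)$, the denominator of $g$ equals $2\tau(x+y+T_1)$, a constant, and the numerator equals $x+T_2$. Hence $g$ is minimised exactly when $x=P(S=1)$ is minimised, i.e.\ when $y=P(S=0)$ is maximised, i.e.\ when $p_c=\argmax_p\epsilon_p$. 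No derivative analysis, no ``small-$\bar\epsilon$'' caveat, and no stochastic-ordering detour are needed; the obstacle you flag at the end disappears once you use $x+y=\text{const}$.
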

\begin{proof}
	See the appendix.
\end{proof}

The importance of Theorem~\ref{theorem:coded-path} is considerable because it allows us to exploit bad-quality links to further improve overall performance and support a high degree of resiliency in our system. This contrasts to other approaches, such as MPTCP or EDPF, where multipath transportation need not provide any gain in performance (even a loss in some cases) over its single-path counterpart when some of the subpaths are lossy~\cite{chen2013measurement}. 

It can be easily shown that the delay gain from applying Theorem \ref{theorem:coded-path} (as opposed to scheduling coded packets on the better link) is a multiple of $P(S=1) \tau$. The value of $\tau$ is bounded by the capacity of the channel but, as $\tau$ becomes larger, $P(S=1)$ also decreases because the decoding events are becoming longer. Our simulations (which we do not show here due to space constraints) also confirm this intuition and show that for small values of $\tau$ (far below the channel capacity) the gain in delay is of the same order of magnitude of the mean buffering delay; however, as the operating rate of the system gets closer to the channel capacity, the effect of the placement of the coded packet vanishes.


\section{Prototype Implementation}\label{sec:prototype}

We use the framework of Coded TCP (CTCP)\cite{ctcp} to implement S-EDPF. CTCP is composed of a pair of SOCKS5 proxies that route data from/to TCP applications into/from (multiple) UDP standard sockets. This gives us the ability to easily implement and evaluate congestion/rate control, coding/decoding and scheduling algorithms \emph{in userspace}. This not only facilitates research and development but it also maximises  its deployability: S-EDPF/CTCP runs in Linux, OS X, Android and *BSD without administration privileges.   

%
%
%

The encoder/decoder is implemented using \emph{finite field arithmetic} in $GF(256)$. This lets us encode and decode information efficiently,  using XORs for addition and subtraction and quick table lookups using SIMD programming for multiplication, without compromising performance (i.e. with very low decoding error probability). 
Decoding is based on a Gaussian elimination algorithm that is called for every received packet and a back substitution algorithm that is called when there are sufficient degrees of freedom (enough linearly independent equations). Decoded packets are released to the TCP socket as soon as they can be handed in order.

To assign packets into each subpath's queue as described in \S\ref{sec:scheduling}, we leverage rate and delay information obtained from feedback to compute a Gaussian approximation of $Z_p$  every half a second using the ``Partition MBT'' algorithm proposed in \cite{max_approx}.
Then, for every incoming packet, we use the last computation of $\vv{Z}$ to solve eq. (\ref{eq:scheduler}) and enqueue each packet accordingly.

We also implement a simple Selective ARQ mechanism to assist our coding scheme. Received packets trigger the transmission of cumulative \emph{acknowledgements} (ACKs) that give the transmitter information regarding delivery delays (used by the packet scheduler), packet loss rate (used by the coding scheme), which packets have been lost (used by the ARQ mechanism), and congestion information (used for rate control).

Finally, unless otherwise stated, we use CTCP's default congestion control algorithm, which is particularly suitable for the sort of networks we target (lossy/variable) and which is fair to other TCP flows (see \cite{ctcp}).

\section{Performance Evaluation}\label{sec:performance}

In the sequel, we summarise a thorough experimental evaluation in both controlled and real environments.  First, we validate our prototype by comparing its performance with the Linux implementation of MPTCP; secondly, we run a set of experiments in controlled environments emulating network conditions; and finally, we run an experimental campaign ``in the wild''.


 \begin{figure}[t!]
\centering
\includegraphics[width=\columnwidth ]{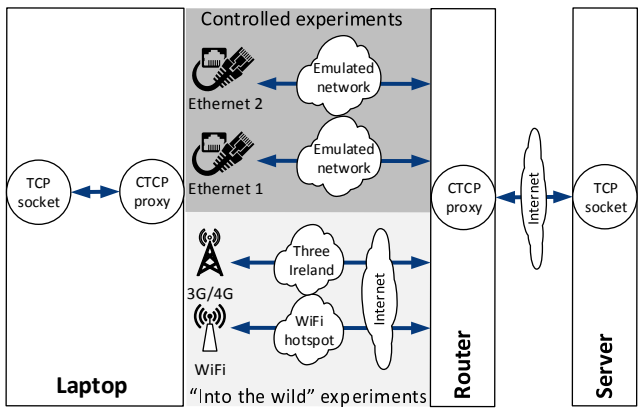}
\vspace{-7mm}
\caption{Testbed configurations.}
\vspace{-5mm}
\label{fig:exp:testbed}
\end{figure}

\subsection{Comparison With MPTCP}

Fig.~\ref{fig:exp:mptcp} anticipated a comparison between MPTCP, EDPF and S-EDPF-$\tau$ with different coding intervals $\tau$.   This experiment was carried out in a home environment, uploading data for 30 s from a laptop attached to a 2.4-Ghz WiFi Access Point and a Meteor 3G/4G dongle to a remote server using {\ttfamily iperf}. For ``Legacy MPTCP'', we used the Linux implementation of MPTCP v0.89 with OLIA~\cite{khalili2013mptcp} for congestion control and  LowRTT as scheduler. We repeat the experiment 5 times for every scheme. The first conclusion is that MPTCP shows the worst performance in terms of both throughput and delay. This could be explained due to MPTCP using a different congestion control scheme (less friendly to wireless losses as shown in \cite{ctcp}), a packet scheduler (LowRTT) that does not account for delay variability (nor does EDPF), and the lack of FEC for packet loss control. The second conclusion is that  S-EDPF ($i$) increases the capacity of the multipath network (due to a more efficient scheduling of transmissions) and ($ii$) is capable of trading data throughput for large delay improvements by tuning $\tau$ (due to our coding scheme).

 \subsection{Experiments on a Controlled Environment}

 We use {\ttfamily tc/netem} to generate normally distributed delay samples which are not correlated\footnote{This is particularly hostile for S-EDPF whose design assumes that delays in the same path are correlated.} and {\ttfamily dummynet} to emulate link bitrates and drop packets randomly when required.  Finally, to avoid external effects from congestion control (e.g. slow start), in this subsection we fix the contention window of each subflow to the bandwidth-delay product of each link (i.e. highest rate without causing congestion). 
 Unless otherwise stated, we send data using {\ttfamily iperf} for 30 s from a laptop attached to two emulated networks, as depicted in Fig.~\ref{fig:exp:testbed}, and repeat each experiments 5 times per mechanism.  
 In the sequel, we first evaluate one variable at a time (randomness, then losses) and then we test the performance of a real video streaming service. 

\subsubsection{Random propagation delays}

\begin{figure}[t!]
\centering
\vspace{-8mm}
\includegraphics[width=\columnwidth ]{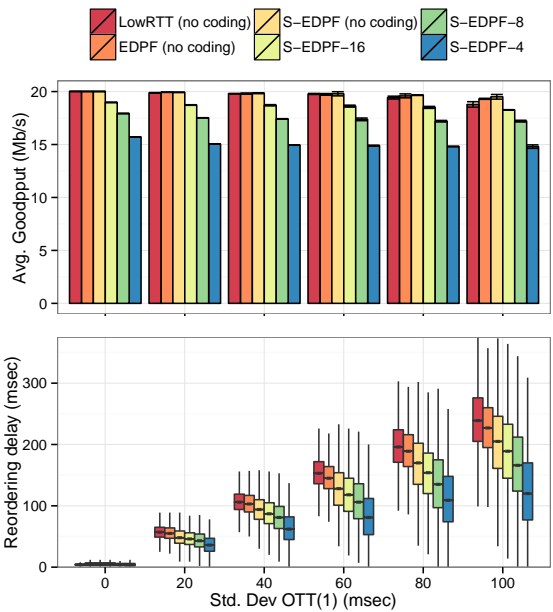}
\vspace{-17mm}
\caption{Buffering delay and throughput upon random (controlled) delays and no loss.}
\vspace{-5mm}
\label{fig:exp:control_1}
\end{figure}

We set up two links at 10~Mb/s each connected to a router. The router introduces  50~ms of fixed propagation delay in link 2 and a random propagation delay with mean 50~ms  and variable standard deviation in link 1. We don't drop any packet in these experiments. We compare MPTCP's LowRTT scheduling scheme, EDPF and S-EDPF-$\tau$ (i.e. S-EDPF with coding interval $\tau$). Results are shown in Fig.~\ref{fig:exp:control_1}. The top subplot shows the average goodput performance experienced at the application layer and the bottom subplot whiskers and boxes for the measured buffering delay (i.e. not including the link's propagation delay). Note that we fully use the aggregate capacity of the network, i.e., the receiver always receives ~20Mb/s of raw traffic. In this way, this figure illustrates how our FEC mechanism uses part of this capacity to send redundant information; for instance, S-EDPF-4 trades 5~Mb/s to reduce delay by half (roughly) when the standard deviation of link 1's propagation delay is larger. 
Note that coding helps delay even when there is no loss. The reason is that depending on the behaviour of the channels, coded packets might arrive sooner than preceding information packets and thus help the receiver to decode still-on-the-air packets sooner than the case when there is no coding. We have evaluated this scenario with different mean delays and S-EDPF offers similar gains in performance (not shown here due to space constraints).

\subsubsection{Lossy subpaths}

We now evaluate the performance of S-EPDF in the presence of lossy paths. To this aim, we fix the propagation delay of two links to 50~ms and the bandwidth at 10~Mb/s, and randomly drop $10\%$ of packets in link 1. In link 2, we don't drop any packets in the experiments depicted in Fig.~\ref{fig:exp:control_2:0}, and $10\%$ of packets in Fig.~\ref{fig:exp:control_2:10}. Note that we only drop data packets and not acknowledgements, to benchmark against an ideal ARQ mechanism. In the figures, we compare our S-EDPF scheduler with different coding intervals and a scheme that only relies on retransmissions (ARQ) to handle losses, like MPTCP. We use EDPF to schedule packets in ``ARQ''.  As depicted in the figure, the scheduling of FEC coded packets helps us to practically eliminate buffering delay with the most aggressive configuration (S-EDPF-4) at the cost of ~15\% of throughput. It is worth mentioning that higher RTTs show larger gains in performance (not shown here for space reasons) because ARQ only retransmits packets upon the reception of feedback and this takes longer the higher the propagation delay.

 \begin{figure}[t!]
 \vspace{-5mm}
        \centering
             \subfloat[Heterogeneous links.]{
	    \includegraphics[trim = 0mm 60mm 0mm 0mm, clip, width=0.5\columnwidth ]{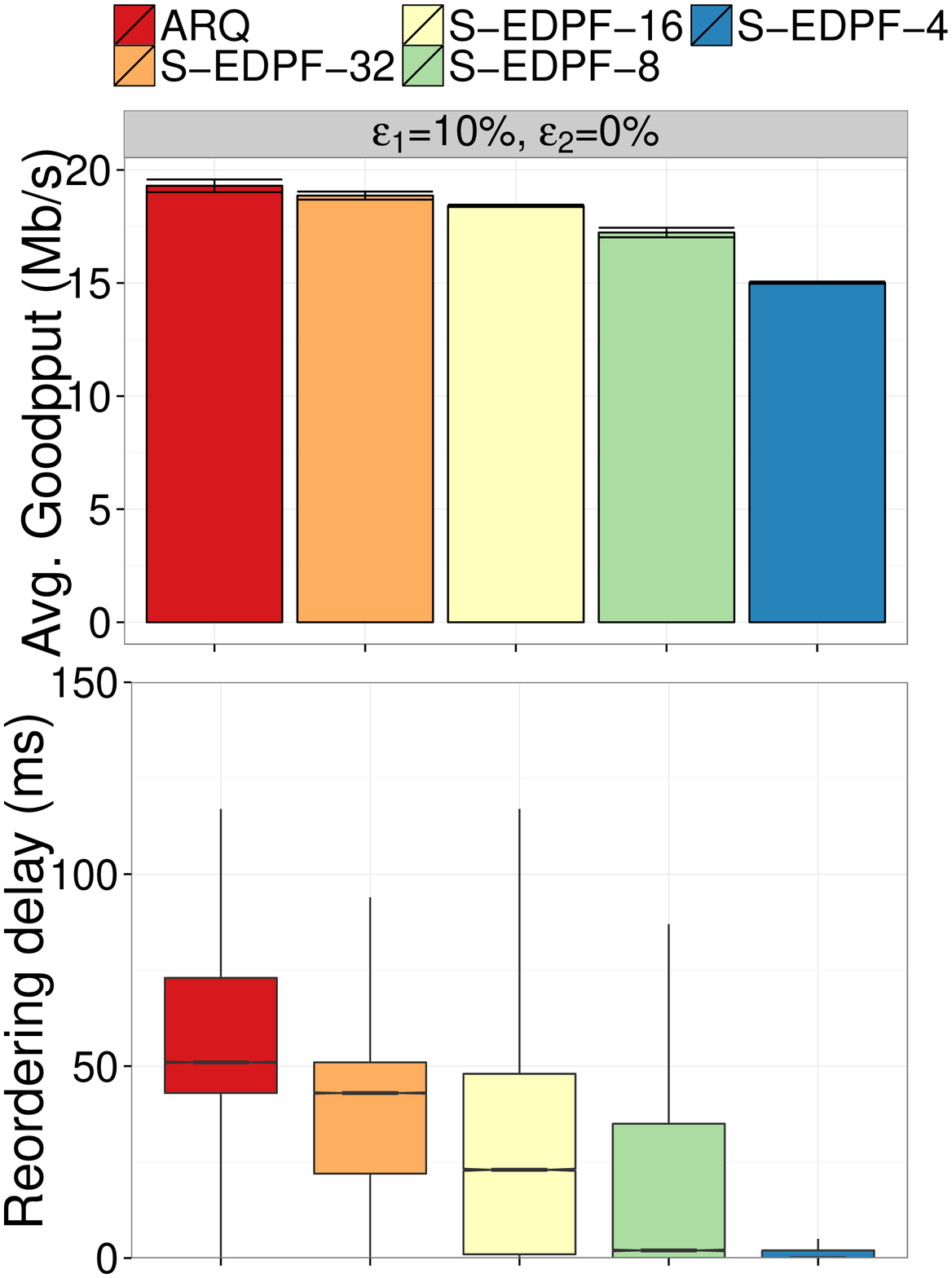}
	    \label{fig:exp:control_2:0}
        }%
        \subfloat[Homogeneous links.]{
                \centering
	    \includegraphics[trim = 0mm 60mm 0mm 0mm, clip,width=0.5\columnwidth]{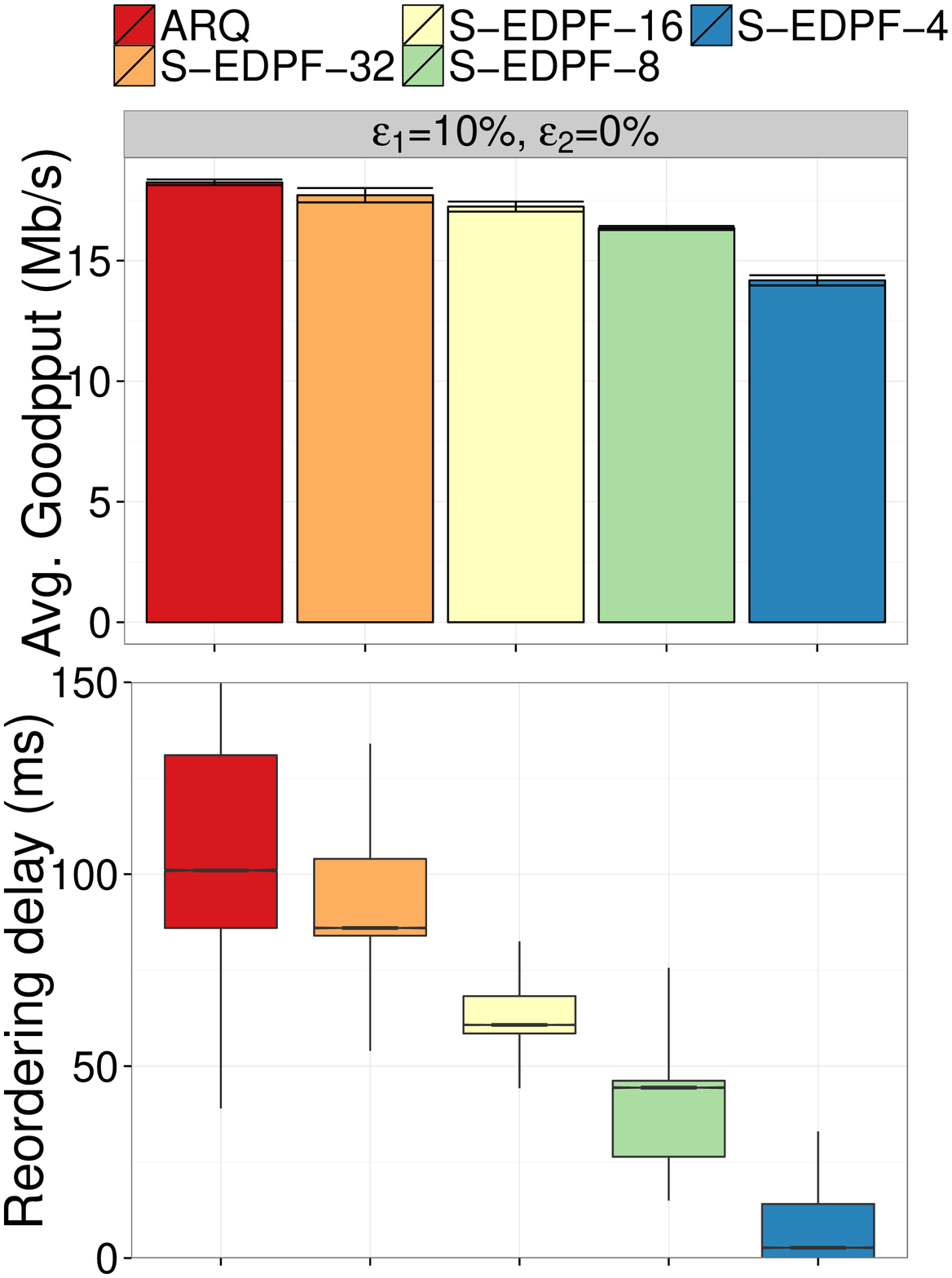}
	    \label{fig:exp:control_2:10}
        }
  \vspace{-2mm}
  \caption{Buffering delay and throughput upon fixed delays and controlled losses.}
  \vspace{-5mm}
  \label{fig:exp:control_2}
\end{figure}

 \subsubsection{Video Streaming}

 In order to assess the impact of the throughput and delay performance gains observed above on real applications, we next evaluate the performance of S-EDPF with a popular video streaming service. We stream a video from  {\ttfamily Youtube} to a {\ttfamily vlc} video player and collect statistics of video frames displayed and dropped (e.g. due to excess buffering delays). We have selected an HD video (with resolution 1280x534), \emph{Star Wars VII teaser trailer 2},\footnote{\url{https://www.youtube.com/watch?v=wCc2v7izk8w} accessed on 26/5/2015.} encoded with H264/MPEG-4 AVC, with duration of 1:49 min, and a rate of $23.97$ fps. We configure the two access links with a bandwidth of 10~Mb/s, an RTT of 100~ms, and a variable random packet loss rate (for both data and ACKs) on one of the links; we don't drop packets on the other link. We also set {\ttfamily vlc} with $150$~ms of caching. Note that, although we emulate network conditions on the access links, we do not have control over the network between our proxy server and the {\ttfamily Youtube} server (i.e. Internet -- see Fig~\ref{fig:exp:testbed}). 
  
 \begin{figure}[t!]
   \vspace{-4mm}
\centering
\includegraphics[width=\columnwidth ]{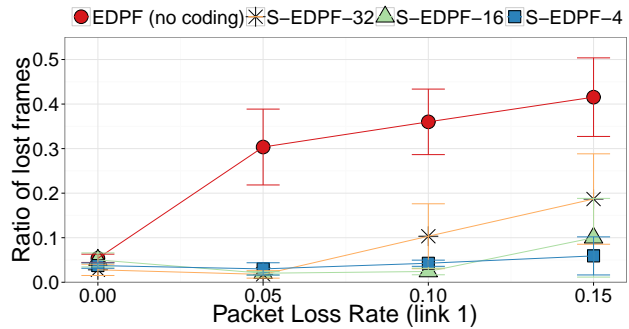}
  \vspace{-5mm}
\caption{{\ttfamily Youtube} video streaming.}
  \vspace{-5mm}
\label{fig:exp:video1}
\end{figure}
 
 We stream the video using EDPF (i.e., relying only on ARQ) and  S-EDPF-$\tau$ for different coding intervals $\tau$, and plot in Fig.~\ref{fig:exp:video1} the mean ratio of video frames that have not been displayed. We repeat each experiment 5 times.
 The results show a dramatic improvement on the streaming experience when using  S-EDPF. In particular, when the access link experiences losses, the video delivered with EDPF stutters and skips significant sections of the video on play back, as illustrated by the figure. In contrast, S-EDPF skips essentially no video frames and playout is consistently smooth.

\subsection{Complexity}

\begin{figure}[t!]
\centering
\vspace{-4mm}
\includegraphics[width=0.8\columnwidth ]{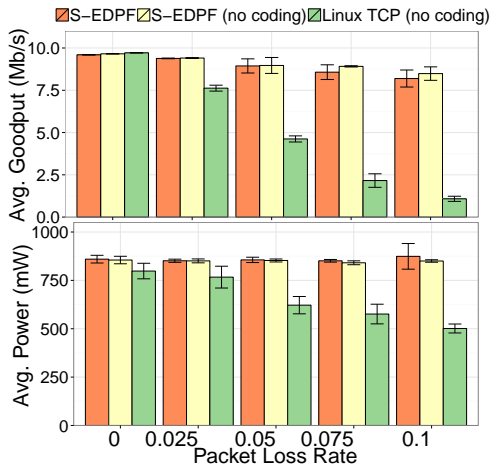}
\vspace{-6mm}
\caption{Energy consumption  and goodput of our prototype \emph{vs.} legacy sockets.}
\vspace{-5mm}
\label{fig:exp:energy}
\end{figure}

Given that our system adds an additional layer to the network stack (in user space), there is a risk of increasing the processing cost of the communication, and thus threatening the lifetime of battery-powered devices. We have performed a thorough profiling of our application,\footnote{Using tools like {\ttfamily valgrind/callgrind} and a power meter.} which highlights the encoder/decoder as the most costly element, as expected, because it uses CPU-intensive operations. In order to evaluate its complexity, we have installed our prototype in an Android LG G5 smartphone and we have measured the energy consumption of the device using a {\ttfamily Monsoon}\footnote{\url{https://www.msoon.com/LabEquipment/PowerMonitor/}} power meter. To simplify the setup, here we only use a local WiFi hotspot (i.e. a single path) with fixed rate (54~Mb/s), fixed CPU frequency, and backlight turned off.  
 We send TCP traffic from the device to the AP using {\ttfamily iperf} for 30~s, repeating each experiment 5 times and collecting the average throughput and power.  In the AP, we randomly drop packets at different rates and compare, in Fig.~\ref{fig:exp:energy},  the performance of S-EDPF, S-EDPF without our coding scheme (i.e. only relying on ARQ) and legacy TCP. For S-EDPF, we send as many extra coded packets as needed to compensate for losses. The first conclusion drawn from Fig.~\ref{fig:exp:energy} is that, when there are no losses, the throughput provided by both S-EDPF and TCP is the same, though at an extra energy cost of $\sim$30~mW. This represents a negligible $\sim$5\% of additional consumption (if we set the backlight with the highest level of brightness, this cost goes down to barely $\sim$1-2\%). When there are losses (not caused by congestion), the higher protection that S-EDPF's congestion control has versus legacy TCP's renders higher throughput performance.\footnote{See~\cite{ctcp} for a formal study of S-EDPF's congestion control.} Note that legacy TCP consumes considerably less energy simply because the bitrate is much lower. It is also worth noting that the additional encoding performed by S-EDPF (as compared to ``S-EDPF (no coding)'') does not seem to entail a significant energy burden.

 \subsection{``Into The Wild''} \label{sec:performance:wild}

Finally, we assess the performance of S-EDPF under real conditions. To this end, we use a laptop attached to a 3G/4G dongle and connect to different public WiFi hotspots around the city of Dublin, Ireland, as depicted in Fig.~\ref{fig:exp:testbed}. Our experimental campaign covers measurements in the campus of Trinity College Dublin, using a departmental WiFi network; a home environment, in a complex with many apartments; a public pub, during busy hours; St. Stephen's Green Mall in Dublin during a weekend day; and Dublin airport. 
  For each location, we download a large file for 30 s and repeat the measurement 5 times for each configuration: EDPF and S-EDPF-$\tau$ with different coding intervals $\tau$.  Fig.~\ref{fig:exp:wild} depicts the average goodput of the downloads and standard errors at the top subplot, and box and whiskers to measure packet delivery delay in the bottom subplot. The results illustrate how S-EDPF \emph{always} shows dramatic gains in delay performance, spanning from a mean delay reduction of 40\% in campus to a surprising 86\% in the airport when using $\tau=4$.
  It is also worthwhile mentioning that the throughput performance of EDPF in the ``Airport'' and ``Home'' tests does not improve over that when using only a single path (not shown here due to space constraints), a result that is consistent with that of \cite{chen2013measurement}'s. The reason is due to the \emph{excessive} buffering delay caused by both the losses and high variability in propagation delays and access rates. In contrast, even in these hostile environments, S-EDPF successfully combined the capacity of both links while keeping the mean packet in-order delivery delay low.

 \begin{figure*}[t!]
 \vspace{-5mm}
        \centering
             \subfloat[Campus.]{
	    \includegraphics[trim = 0mm 25mm 0mm 0mm, clip, width=0.18\linewidth]{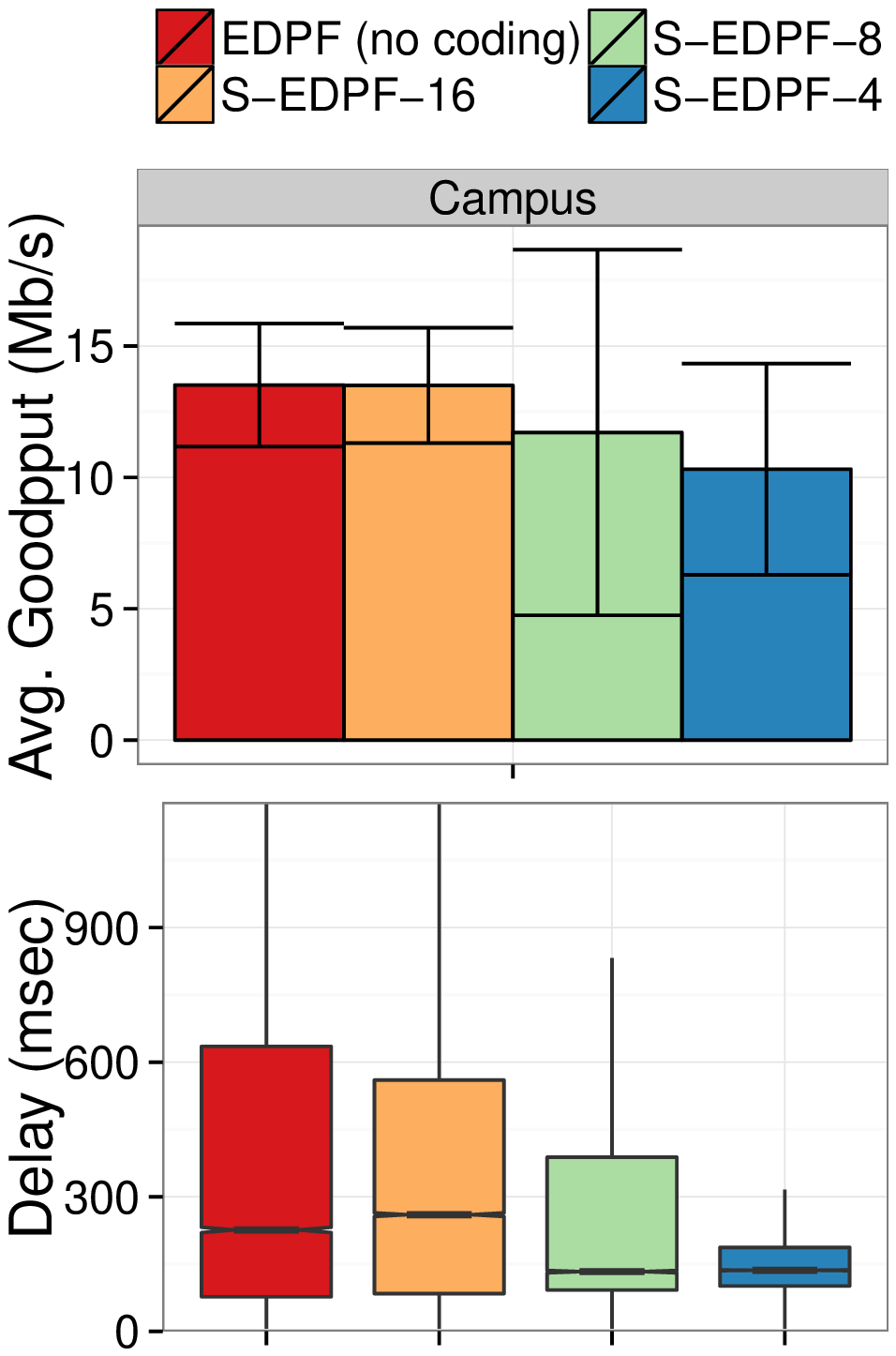}
	    \label{fig:exp:wild:campus}
        }%
        \subfloat[Home.]{
                \centering
	    \includegraphics[trim = 0mm 25mm 0mm 0mm, clip, width=0.18\linewidth]{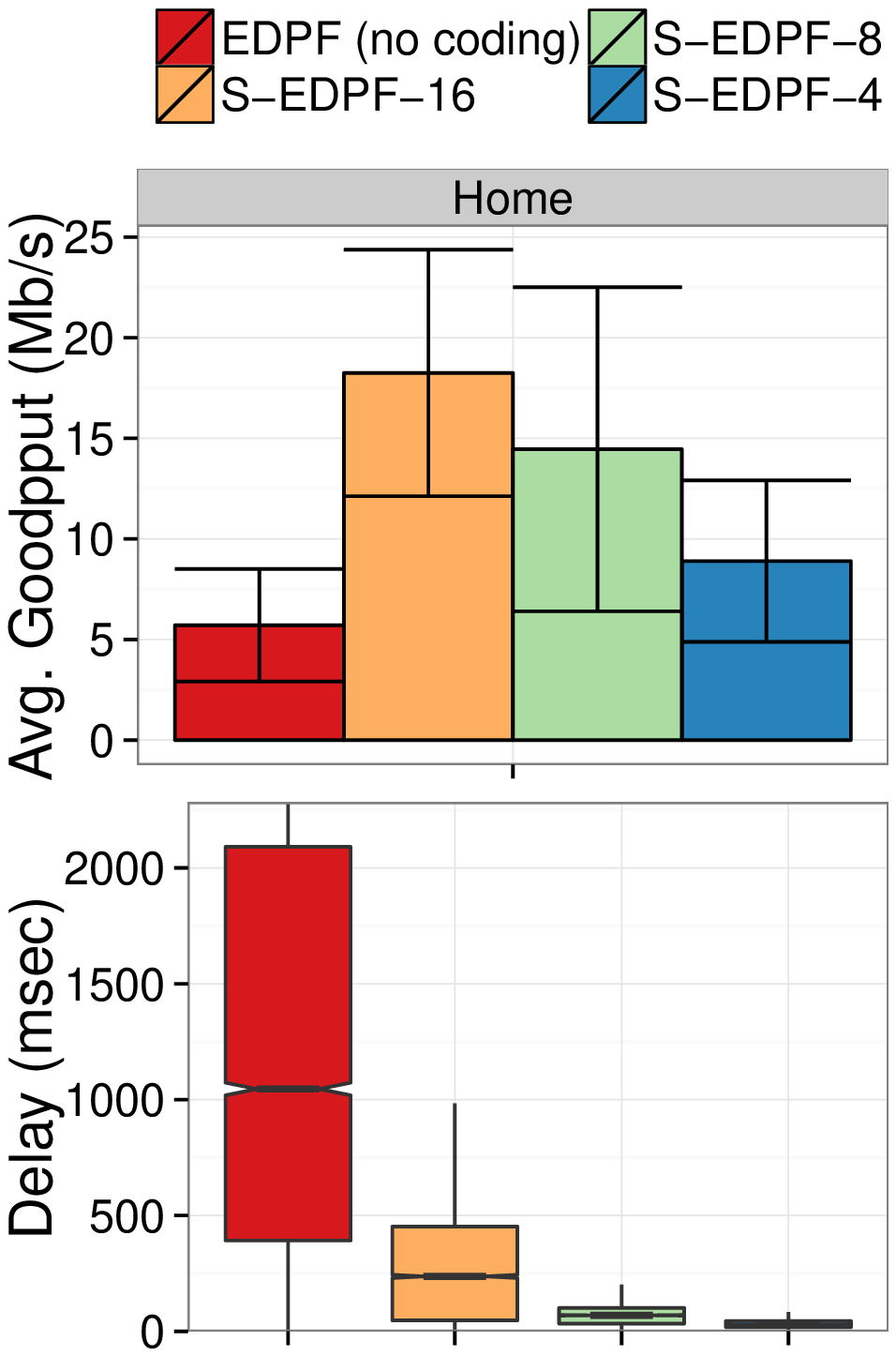}
	    \label{fig:exp:wild:home}
        }
        \subfloat[Pub.]{
                \centering
	    \includegraphics[trim = 0mm 25mm 0mm 0mm, clip, width=0.18\linewidth]{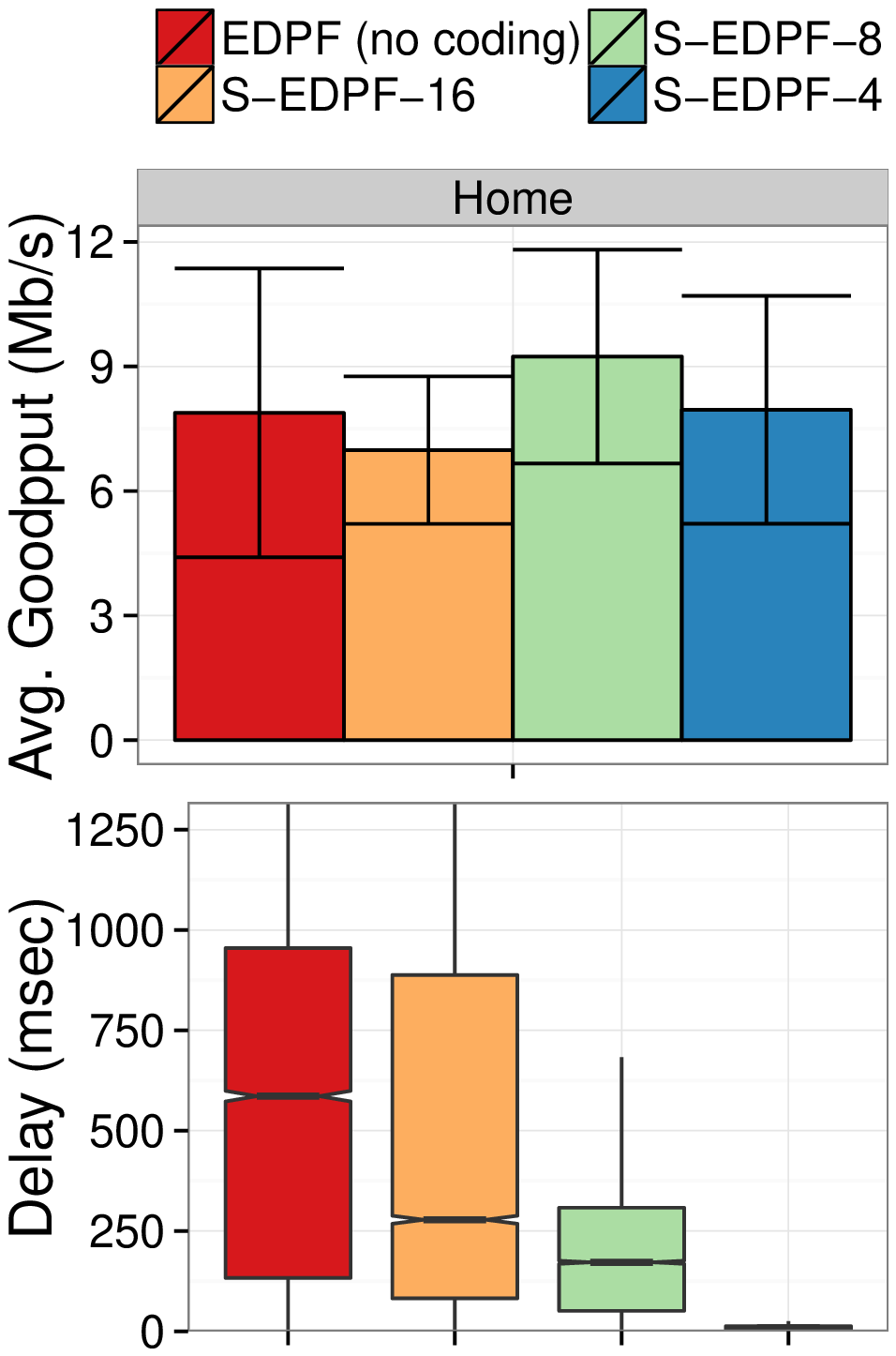}
	    \label{fig:exp:wild:pub}
        }
        \subfloat[Mall.]{
                \centering
	    \includegraphics[trim = 0mm 25mm 0mm 0mm, clip, width=0.18\linewidth]{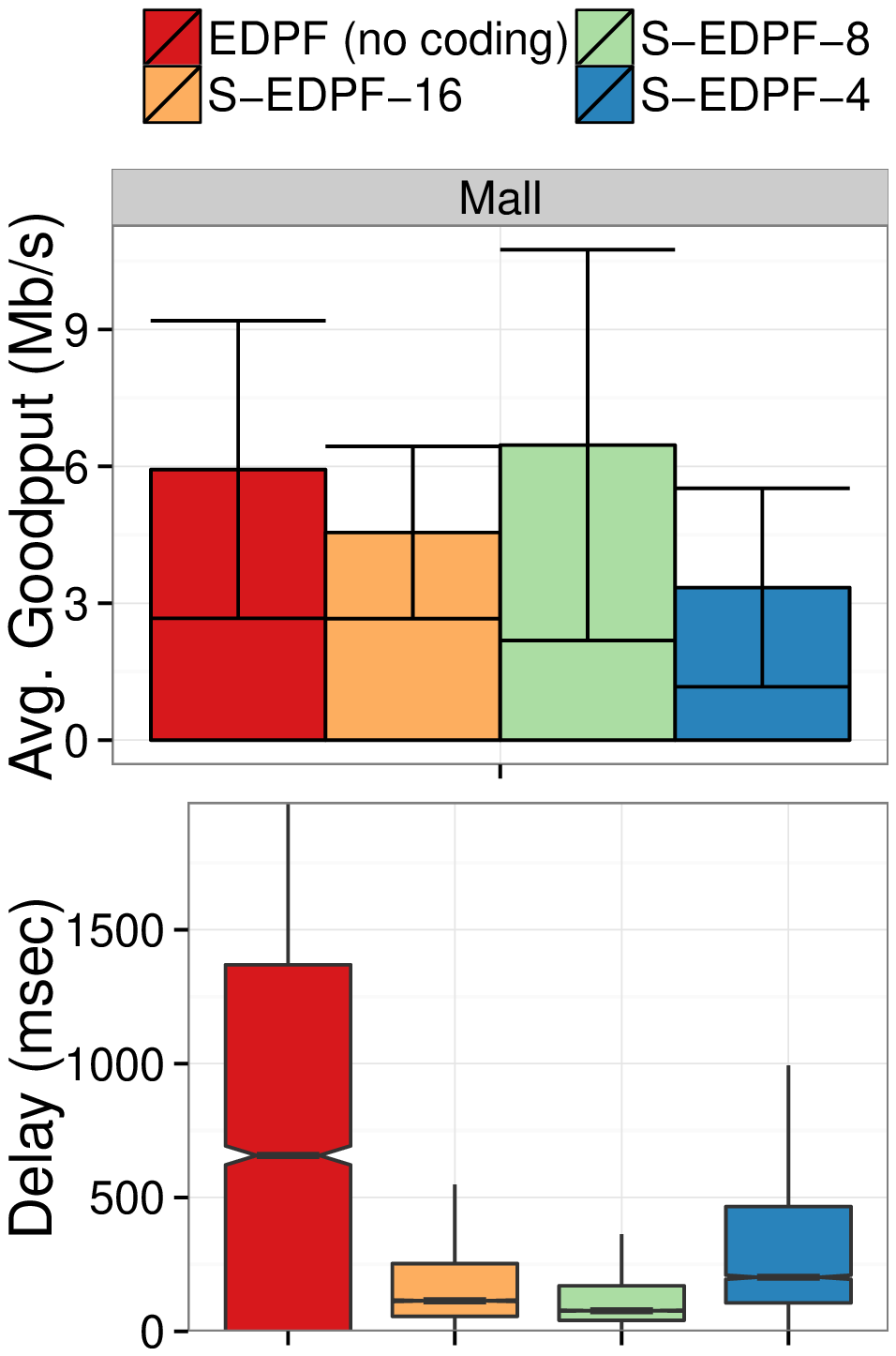}
	    \label{fig:exp:wild:mall}
        }
        \subfloat[Airport.]{
                \centering
	    \includegraphics[trim = 0mm 25mm 0mm 0mm, clip, width=0.18\linewidth]{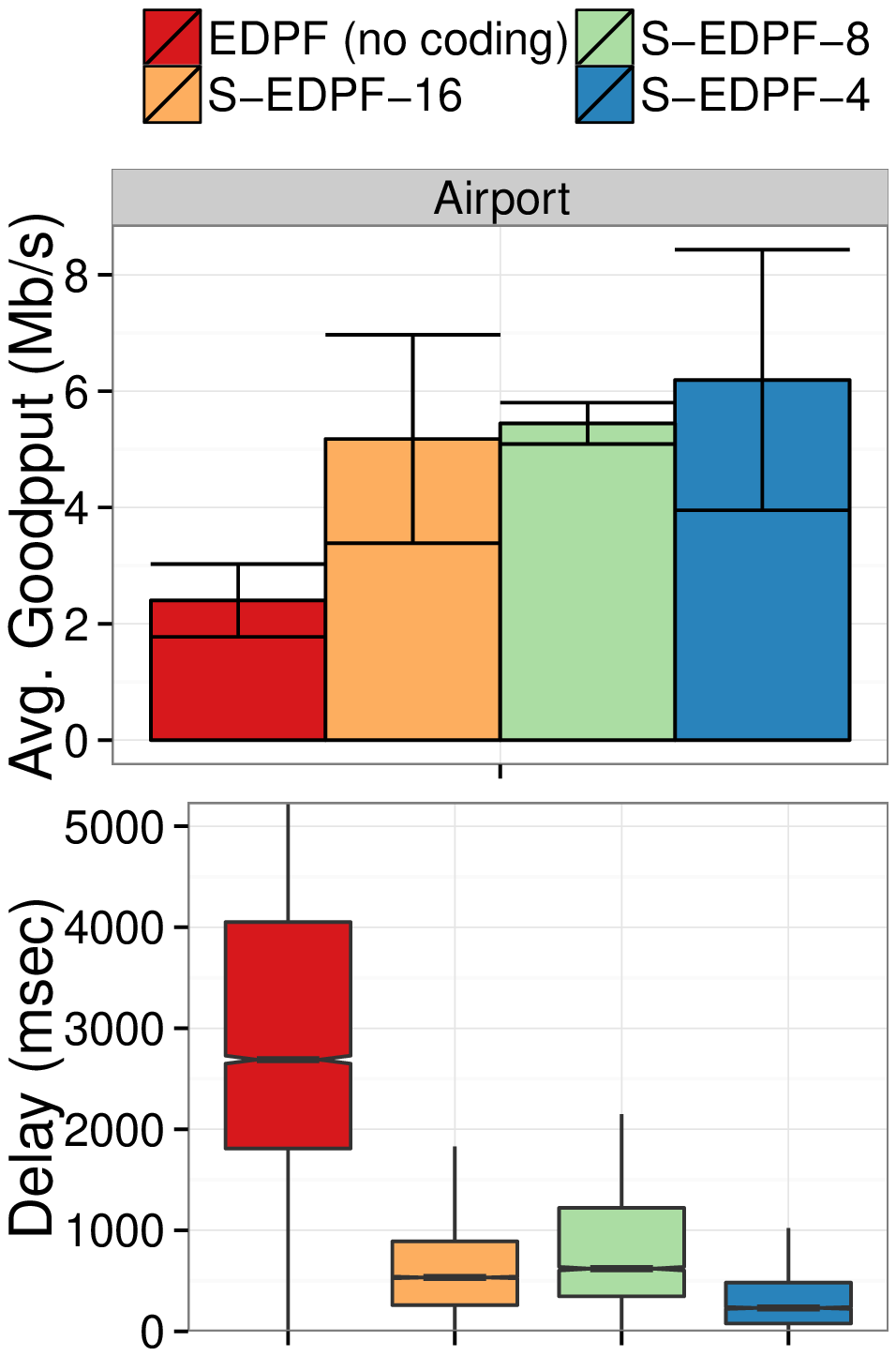}
	    \label{fig:exp:wild:airport}
        }        
        \vspace{-3mm}

  \caption{Experiments \emph{into the wild}.}
  \vspace{-2mm}
  \label{fig:exp:wild}
\end{figure*}

\section{Conclusions}

Multipath transport protocols are a promising technology to increase reliability and aggregate the capacity of multiple access providers. However, as we have  illustrated experimentally in this paper,  schedulers that do not consider transport delay variability, like LowRTT (MPTCP) or EDPF, suffer from a severe performance degradation in terms of delay that make them unsuitable for real-time applications. We have also shown that using ARQ to recover lost packets  further penalizes delay when round-trip times are large due to congestion or distance. To combat these issues, we propose in this paper a mechanism named Stochastic Earliest Delivery Path First (S-EDPF) that jointly schedules information and redundant (FEC) packets across the multiple network interfaces to ($i$) minimise the impact of packet reordering at the receiver when links are variable, and ($ii$) account for low-delay packet error recovery.


\section{Acknowledgements}
Work supported by SFI grants 11/PI/1177 and 13/RC/2077.


\appendix

\section{Proofs}

\begin{proof}[Lemma \ref{lem:one}]
	
	Let $\alpha_{p,1},\alpha_{p,2},\dots,\alpha_{p,n}$ denote realisations of random variables $a_{p,1},a_{p,2},\dots,a_{p,n}$ i.e. the arrival times of packets sent in slot $n$ of path $p$.  
	
	Suppose packets on the same path are sent in ascending index order, i.e. for any $k\in\{1,s,\dots\}$ and $q_1,q_2\in K_{p,k}$ such that $q_1<q_2$ then $s_{p,q_1}<s_{p,q_2}$.  By Assumption \ref{assump:reorder}, it follows that $\alpha_{p,q_1}<\alpha_{p,q_2}$.  Let $\psi_{p,k} : = \max\{\alpha_{p,s_q}: q\in K_{p,k}\}$ and $\psi_k:=\max\{\psi_{1,k},\dots,\psi_{P,k}, \alpha_{p_k,s_k}\}$.  Then we have $\psi_{p,k} \le \psi_{p,k+1}$, $k\in\{1,2,\dots\}$, with equality only when $K_{p,k+1}=K_{p,k}$.  To see this, note that when $K_{p,k+1}=K_{p,k}$ equality is trivial, and when $K_{p,k+1}\ne K_{p,k}$ then $q^*_{k+1}:=\max\{q\in K_{p,k+1}\} > q^*_k:=\max\{q\in K_{p,k}\}$ due to the ascending order and so $\alpha_{p,q^*_{k+1}}>\alpha_{p,q^*_{k}}$ and thus $\psi_{p,k+1} > \psi_{p,k}$.  
	Hence, $\psi_k < \psi_{k+1}$, $k\in\{1,s,\dots\}$ i.e. the in-order delivery time is strictly increasing.   
	
	Suppose now that the transmission slots of two packets $q_1$ and $q_2$ with $q_1<q_2$ on path $p$ are swapped, so that the packets are now sent in non-ascending order $s_{q_1}>s_{q_2}$.   Let $\psi_{p,k}^{non} : = \max\{\alpha_{p,s_q}: q\in K_{p,k}\}$ and $\psi_k^{non}:=\max\{\psi_{1,k}^{non},\dots,\psi_{P,k}^{non},\alpha_{p_k,s_k}\}$. Then $\psi_{p,k}^{non} \le \psi_{p,k+1}^{non}$ but now for at least one $k\in\{1,2,\dots\}$, namely $k=q_2$, there will be equality when $K_{p,k+1}\ne K_{p,k}$ and so $\psi_k ^{non}= \psi_{k+1}^{non}$.  Further, for $k\ge q_2$ then $\psi_k ^{non}=\psi_k$ and also for $k< q_1$.  Hence, $\psi_k ^{non}>\psi_k$ for $k\in\{q_1,\dots,q_2-1\}$ and so the sum-delay is increased relative to sending packets in ascending order.      
	
	We can proceed by induction to show that swapping further packets cannot decrease the sum-delay below that when packets are sent in ascending order (there are two cases to consider, ($i$) when the further set of swapped packets is disjoint from those already swapped, in which case the above argument can be re-applied directly, and ($ii$) when the further set of swapped packets intersects with those already swapped, in which case we can recover an ascending packet order).
	
	Since the above holds for \emph{any} realisation $\alpha_{p,1},\dots,\alpha_{p,n}$, we are done.
\end{proof}

\begin{lemma}[Lyapunov Central Limit Theorem]\label{lem:clt}
	Suppose $\{X_1, X_2, \cdots \}$ is a sequence of independent random variables, each with finite expected value $\mu_i$ and variance $\sigma_i^2$. Let us define $ s_n^2 = \sum_{i=1}^n \sigma_i^2$.
	If for some $\delta > 0$ , the Lyapunov\textsc{\char13}s condition
	$$\lim_{n\to\infty} \frac{1}{s_{n}^{2+\delta}} \sum_{i=1}^{n} \mathbb{E}\big[\,|X_{i} - \mu_{i}|^{2+\delta}\,\big] = 0$$
	is satisfied, then a sum of $(X_i − \mu_i)/s_n$ converges in distribution to a standard normal random variable, as $n$ goes to infinity:
	$$\frac{1}{s_n} \sum_{i=1}^{n} (X_i - \mu_i) \ \xrightarrow{d}\ \mathcal{N}(0,\;1).$$
\end{lemma}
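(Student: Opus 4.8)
The plan for Lemma~\ref{lem:clt}, a classical result (see e.g.~\cite{gallager13}), is to run the standard characteristic-function argument. First I would reduce to the standardised setting: replacing $X_i$ by $X_i-\mu_i$ changes neither $s_n^2$ nor the Lyapunov sum $\sum_{i=1}^n\mathbb{E}[|X_i-\mu_i|^{2+\delta}]$, so assume $\mu_i=0$; and since $\mathbb{E}[|X_i|^{2+\delta'}]^{1/(2+\delta')}\le\mathbb{E}[|X_i|^{2+\delta}]^{1/(2+\delta)}$ by Jensen, Lyapunov's condition for a given $\delta$ implies it for every smaller positive exponent, so assume $0<\delta\le1$. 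Writing $Z_n=s_n^{-1}\sum_{i=1}^n X_i$, by L\'evy's continuity theorem it suffices to prove $\varphi_{Z_n}(t)\to e^{-t^2/2}$ for each $t\in\mathbb{R}$, and by independence $\varphi_{Z_n}(t)=\prod_{i=1}^n\varphi_{X_i}(t/s_n)$.

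Next I would expand each factor to second order with an explicit remainder. Using the elementary inequality $|e^{ix}-1-ix+x^2/2|\le\min\{|x|^3/6,\,|x|^2\}\le|x|^{2+\delta}$, valid for all real $x$ when $0<\delta\le1$, applied with $x=tX_i/s_n$ and followed by taking expectations,
\begin{align*}
\varphi_{X_i}\!\left(\tfrac{t}{s_n}\right)=1-\frac{t^2\sigma_i^2}{2s_n^2}+r_{n,i},\qquad |r_{n,i}|\le\frac{|t|^{2+\delta}}{s_n^{2+\delta}}\,\mathbb{E}\big[|X_i|^{2+\delta}\big].
\end{align*}
Summing the bounds gives $\sum_{i=1}^n|r_{n,i}|\le|t|^{2+\delta}\,s_n^{-(2+\delta)}\sum_{i=1}^n\mathbb{E}[|X_i|^{2+\delta}]\to0$ by the Lyapunov condition. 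From $\sigma_i^2\le\mathbb{E}[|X_i|^{2+\delta}]^{2/(2+\delta)}$ one also obtains the uniform-negligibility estimate $\max_{i\le n}\sigma_i^2/s_n^2\le\big(s_n^{-(2+\delta)}\sum_{i=1}^n\mathbb{E}[|X_i|^{2+\delta}]\big)^{2/(2+\delta)}\to0$.

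Finally I would assemble the product. With $z_{n,i}=\varphi_{X_i}(t/s_n)$ and using $\sum_i\sigma_i^2=s_n^2$ we have $\sum_{i=1}^n(z_{n,i}-1)=-(t^2/2)\,s_n^{-2}\sum_i\sigma_i^2+\sum_i r_{n,i}=-t^2/2+o(1)$; moreover $\max_i|z_{n,i}-1|\le(t^2/2)\max_i\sigma_i^2/s_n^2+\max_i|r_{n,i}|\to0$ and $\sum_i|z_{n,i}-1|\le t^2/2+o(1)=O(1)$. The standard lemma that $\prod_{i=1}^n z_{n,i}\to e^{w}$ whenever $\sum_i(z_{n,i}-1)\to w$, $\max_i|z_{n,i}-1|\to0$ and $\sum_i|z_{n,i}-1|=O(1)$ (proved by writing $\log z_{n,i}=(z_{n,i}-1)+O(|z_{n,i}-1|^2)$ and summing, the error being uniform once the factors are close to $1$) then yields $\varphi_{Z_n}(t)\to e^{-t^2/2}$, and the claim follows.

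The step I expect to be the main obstacle is this last passage from the product to $e^{-t^2/2}$: making it rigorous requires both the uniform-negligibility bound above and careful control of the second-order error of the complex logarithm, and it is precisely here that the $(2+\delta)$-moment hypothesis is essential --- a bare second moment would not force $\max_i\sigma_i^2/s_n^2\to0$. An alternative packaging is to first deduce Lindeberg's condition from Lyapunov's, via $\mathbb{E}[X_i^2\,\mathbf{1}\{|X_i|>\varepsilon s_n\}]\le(\varepsilon s_n)^{-\delta}\mathbb{E}[|X_i|^{2+\delta}]$ summed over $i$, and then invoke the Lindeberg--Feller theorem; but this merely relocates the same analytic content.
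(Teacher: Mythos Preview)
Your argument is correct and is the standard characteristic-function proof of the Lyapunov CLT; the reduction to $0<\delta\le 1$ (which you justify tersely via Jensen but really needs two applications of H\"older, one pointwise and one on the sum) and the product-to-exponential lemma are both handled properly. However, the paper itself does not prove Lemma~\ref{lem:clt} at all: it is simply stated as a classical result and then invoked in the proof of Theorem~\ref{theorem:S-process}. So there is no ``paper's own proof'' to compare against --- you have supplied a full argument where the authors chose to cite the result without proof.
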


\begin{proof}[Theorem \ref{theorem:S-process}: S process]
The proof for 1)-4) goes as follows:
\begin{enumerate}
	\item We know that for the decoding process to go beyond $k$ we need at least $k$ erasures in the first $k$ frames. This means that there are cases with more than $k$ erasures in the first $k$ intervals that the decoding process stops before $k$ but for it to be greater than $k$ we should have at least $k$ erasures. Let $E_{k}$ denote the number of losses in $k$ frames. We have $ P(S>k | E_{k}\leq k )=0$. More formally:
	\begin{align}
		P(S > k ) & =   P(E_{kl}>k) P(S>k | E_{kl}>k ) \nonumber\\ 
		&\quad +P(E_{k}< k) P(S>k | E_{k}<k ) \nonumber\\
		& =   P(E_{k}>k) P(S>k | E_{k}>k )  \nonumber\\
		& <  P(E_{k}>k) \nonumber
	\end{align}
	
	$E_k$ is the summation of $\tau$ independent Bernoulli random variable with parameter $\epsilon_i$, in which $i$ depends on the path. By Lemma \ref{lem:clt}, we know the Lyapunov\textsc{\char13}s condition is satisfied and 
	$$\frac{1}{ \sqrt{ \sum_{p\in\mathcal{P}} kN_{\tau,p}\epsilon_p (1-\epsilon_p)}} (E_k - \sum_{p\in\mathcal{P}} kN_{\tau,p}\epsilon_p ) $$
	converges in distribution to $\mathcal{N}(0,\;1)$ for large values of $k$. $ P(E_{k}>k) $ goes to zero only if $ k $ is larger than $ \sum_{p\in\mathcal{P}} kN_{\tau,p}\epsilon_p$. This means 
	$$
	\sum_{p\in\mathcal{P}} N_{\tau,p}\epsilon_p <1.
	$$ 
	\item It is trivial to compute $P(S\!\!=k), \forall k\!\in\!\{0,1\}$ exactly. 
	
	Now, following the proof of Theorem 1.2 in \cite{karzand-2014}, we can compute $P(S=k),~k>1$  as:
	\begin{align}
	&P(S=k) = \label{eq:exact_S}\\
	&=\!\!\!\!\sum_{r=2}^{\min(k,l)}\!\!\!\!\!\! P(\mathcal{L}(\vv{\epsilon})\!=\!r)\frac{r\!\!-\!\!1}{k\!\!-\!\!1}P(\mathcal{L}(\left [\vv{\epsilon}\right ]_{\times (k-1)}\!\!=\!\!k\!\!-\!\!r)), \forall k>1\nonumber
	\end{align}
	where  $\vv\epsilon=\{\epsilon_{p_1}, \dots, \epsilon_{p_\tau}\}$, $\epsilon_{p_i}$ is the erasure probability in the subpath assigned to packet $i$, and  $\left [\vv{\epsilon}\right ]_{\times (k-1)}$ appends the $\vv{\epsilon}$ vector $k-1$ times.
	However, the computation of $P(\mathcal{L}(\vv{\epsilon})\!=\!r)$ requires a complex iterative algorithm~\cite{ehm-poisson_binomial}, which is not feasible to implement in S-EDPF. 
	We thus approximate this part of the S distribution with:
	\begin{align}\label{eq:distrib_S:tail_extended}
	&P(S=k) \approx P(S_1=k) = & \\
	&=\!\!\!\!\!\!\sum_{r=2}^{min(k,l)}\!\!\!\!\!\!P(\mathcal{B}(\bar\epsilon, \tau\!\!=\!\!r))\frac{r\!\!-\!\!1}{k\!\!-\!\!1}P(\mathcal{B}(\bar{\epsilon}, {\tau}(k\!\!-\!\!1)=k\!\!-\!\!r)) &\nonumber\\
	&= \frac{N_\tau}{k}    \bar\epsilon^{k} (1\!\!-\!\!\bar\epsilon)^{kN_\tau} {(k-1)\tau \choose k-1}, \forall k>1 \nonumber
	\end{align}
	where $\mathcal{B}$ is the binomial distribution with parameters $\bar\epsilon = \frac{\sum_{p\in\mathcal{P}} N_{\tau,p}\epsilon_p}{ N_\tau }$ and $N_\tau =   \sum\limits_{\substack{p\in\mathcal{P}}} N_{\tau,p} $.
	%
	%
	\item 
	Let us define $S_1$ as a random variable with the distribution of $S$ but with erasure probability $\bar\epsilon = \frac{\sum_{p\in\mathcal{P}} N_{\tau,p}\epsilon_p}{ N_\tau }$ for all subpaths. Following \cite{karzand-2014}, the first moment and second moments of $S_1$ are
	\begin{align}
	\mathbb{E}[S_1] &= \frac{(N_\tau-1) \bar\epsilon (1-\bar\epsilon)^{N_\tau-1}}{1-N_\tau \bar\epsilon} &\\
	\mathbb{E}[S_1^2] &= \mathbb{E}[S] + \frac{N_\tau(N_\tau-1) \bar\epsilon^2 (1- \bar\epsilon)^{N_\tau} }{(1-N_\tau \bar\epsilon)^3}
	\end{align}
	Now, $\mathbb{E}[S]$ can be approximated as follows:
	\begin{align}
	\mathbb{E}[S] \approx  P(S=1) + \sum_{k=2}^{\infty}kP(S_1=k).
	\end{align}
	Note that
	$
	\sum_{k=2}^{\infty}kP(S_1=k) = \mathbb{E}[S_1] - P(S_1 = 1)
	$,
	Thus,
	\begin{align}
	\mathbb{E}[S]  &\approx P(S=1) + \mathbb{E}[S_1] - P(S_1=1)= & \nonumber\\
	&= P(S\!\!=\!\!1) + \frac{N_\tau(N_\tau \!\!-\!\!1)\bar\epsilon^2(1\!\!-\!\!\bar\epsilon)^{N_\tau\!-\!1}}{1\!\!-\!\!N_\tau\bar\epsilon},
	\end{align}
	given that $P(S_1=1) = (N_\tau-1)\bar\epsilon(1-\bar\epsilon)^{N_\tau-1}$.
	$\mathbb{E}[S^2]$ can be approximated similarly.
\end{enumerate}
\end{proof}

\begin{proof}[Theorem \ref{theorem:delayS}: In order delivery delay ]
	Let us assume that a transmission of a stream of $N_t$ packets is approximately a multiple of $\tau$ at time $t$. Upon this assumption, we have that, at time $t$, we have transmitted $\frac{N_t}{\tau} $ frames. Let us assume now that the decoding process consists of decoding periods of length $\{s_1, s_2, \dots , s_n, \dots\}$. Note that $\{S_1, S_2, \dots , S_n\}$ is sequence of  positive independent identically distributed random variables. Assume $S^+ = \min \{S, 1\},$ and define $J_n$ as follows:
	$J_n = \sum_{i=1}^n S_i^+,  n>0$; then the renewal interval $[J_n,J_{n+1}]$ is a decoding period. 
	Let $(X_t)_{t\geq0}$ count the decoding $\tau$-intervals that have occurred by time $t$, which is given by
	$$
	X_t :=  \sum^{\infty}_{n=1} \mathbb{I}_{\{J_n \leq t\}}=\sup \left\{\, n: J_n \leq t\, \right\}
	$$ 
	and is a renewal process ($\mathbb{I}$ is the indicator function).
	
	Let $W_1, W_2, \ldots$ be a sequence of $i.i.d.$ random variables denoting the sum of in order delivery delay in each decoding frame. We have two cases to consider.  Case (i): suppose the $j$'th period is an idle period.  Then $S_j=0$ and the information packets are delivered in-order with delay $\Delta_p$, where $\Delta_p$ is the transmission time of a packet sent in subpath $p$, here assumed as constant.  Case (ii): suppose the $j$'th period is a busy period and the information packet erasure that initiated the busy period started in the first slot $t_{i(j)}+1$.   Then the first information packet in each path $p$ is delayed by $S_j\Delta_pN_{\tau,p}$ slots, the second by $S_j \Delta_p N_{{\tau,p}}-1$ slots and so on.   The sum-delay over all of the information packets over all different paths in the busy period is therefore $\sum_{p\in\mathcal{P}} (\sum_{k=1}^{S_j\Delta_p N_{\tau,p}}k - \sum_{k=0}^{S_j-1}k\Delta_p N_{\tau,p}) < \sum_{p\in\mathcal{P}}\frac{S_j^2\Delta_p N_{\tau,p}(\Delta_p N_{\tau,p}-1)}{2}$.  
	The random variable $Y_t = \sum_{i=1}^{X_t}W_i $ is a renewal-reward process and its expectation is the sum of in order delivery delay over the time-span of $t$. 
	Based on the elementary renewal theorem for renewal-reward processes \cite{gallager13}, we have:
	\begin{equation}\label{eqn:reward}
	\lim_{t \to \infty} \frac{1}{t} \mathbb{E}[Y_t]   = \frac{\mathbb{E}(W_1)}{\mathbb{E}[S_1^+]}.
	\end{equation}
	Based on the construction of $W_i$, we have
	$\mathbb{E}[W_1]  = \sum_{i=1}^{\infty} \mathbb{E}[W_1|S_1^+= i] \Pr(S_1^+=i) = \sum_{i=0}^{\infty} \mathbb{E}[W_1|S_1= i] \Pr(S_1=i) = \mathbb{E}[S^2] \sum_{p\in\mathcal{P}}  \frac{\Delta_p N_{\tau,p}(\Delta_p N_{\tau,p}-1)}{2}$.
	We also have that 
	$
	\mathbb{E}[S_1^+] = \mathbb{E}[S] + P(S=0)
	$, and that 
	$
	\lim_{t \to \infty} \frac{1}{t} \mathbb{E}[Y_t] \leq  \frac{\mathbb{E}[S^2]}{ \mathbb{E}[S] + P(S=0)} \sum_{p\in\mathcal{P}}  \frac{\Delta_p N_{\tau,p}(\Delta_p N_{\tau,p}-1)}{2}. 
	$
	Since we assumed that $N_t = t {\tau}$, we have:
	$$
	\lim_{N_t \to \infty} \frac{1}{N_t} {E}[Y_t] \leq \frac{ \sum_{p\in\mathcal{P}}  \Delta_p N_{\tau,p}(\Delta_p N_{\tau,p}-1)}{ 2{\tau}(\mathbb{E}[S] + P(S=0))}\mathbb{E}[S^2]
	$$
\end{proof}

\begin{proof}[Theorem \ref{theorem:coded-path}: Path choice for coded packets]
	Suppose we have a system with $\mathcal{P}=\{1,\dots,p\}$ subpaths with erasure probabilities $\epsilon_1\ge\epsilon_2\ge\dots\ge\epsilon_p$.  Let us first evaluate two independent $S$ processes, $S_1$ and $S_2$, with $p_c=1$ and $p_c=p_2$ respectively, i.e., we schedule our coded packets on path $1$ in the first case, and $p_2$ (any other) second. We know from Theorem~\ref{theorem:S-process} that the path selected for the coded packets \emph{does not} affect the decoding process (delay) when $S>1$, i.e., 
	\begin{align}\label{eq:coded-path:equal}
	P(S_1=k) = P(S_2=k),~\forall k>1.
	\end{align}
	
	When $\sum_{p\in\mathcal{P}} N_{\tau,p} \epsilon_p < 1$ (i.e. when we operate below the capacity limit), the above yields:
	\begin{align}\label{eq:coded-path:balance}
	P(S_1=0)\!+\!P(S_1=1)\!=\!P(S_2=0)\!+\!P(S_2=1)
	\end{align}
	given that $\sum_k P(S=k) = 1$.
	
	Note that the decoding delay is equal to zero slots with probability $P(S=0)$ (because it is the in-order delivery state), and non-zero otherwise (packets are being buffered until all losses are recovered). This means that, given equations (\ref{eq:coded-path:equal}) and (\ref{eq:coded-path:balance}), \emph{the $S_i$ process that achieves lower delay is the one that maximises $P(S=0)$}. 
	Let us then compare $P(S_1=0)$ and ${P(S_2=0)}$:
	\begin{align}
	\frac{P(S_1=0)}{P(S_2=0)} &= \frac{ (1-\epsilon_{1})^{N_{\tau,1}-1}\prod_{i \ne 1} (1-\epsilon_i)^{N_{\tau,i}} }{ (1-\epsilon_{p_2})^{N_{\tau,p_2}-1}\prod_{i \ne p_2} (1-\epsilon_i)^{N_{\tau,i}} } = \nonumber\\
 &=\frac{(1-\epsilon_{p_2})}{(1-\epsilon_{1})} \ge 1 \label{eq:coded-path-n2}
	\end{align}
	because $1 \ge \epsilon_1 \ge \epsilon_{p_2} \ge 0$ . Thus
	$
	P(S_1=0) \ge P(S_2=0)
	$,
	i.e., scheduling coded packets on path 1 (the one with highest loss probability) minimises the decoding delay, and this is valid for any $S_i$ process different than $S_1$.
\end{proof}

\end{document}